\documentclass{article}

\usepackage{PRIMEarxiv}

\usepackage[utf8]{inputenc} 
\usepackage[T1]{fontenc}    
\usepackage{hyperref}       
\usepackage{url}            
\usepackage{booktabs}       
\usepackage{amsmath}
\usepackage{amsthm}
\usepackage{amssymb}
\usepackage{amsfonts}
\usepackage{mathtools}
\usepackage{amsfonts}       
\usepackage{nicefrac}       
\usepackage{microtype}      
\usepackage{fancyhdr}       
\usepackage{graphicx}       
\usepackage{times}
\usepackage{subfig}
\usepackage{xcolor}
\usepackage{array}
\usepackage{cite}
\usepackage{multirow}
\usepackage{adjustbox}
\usepackage{algorithm}
\usepackage{algpseudocode}
\usepackage[font=footnotesize,labelfont=bf]{caption}
\usepackage{tikz-network}
\usepackage{enumitem}

\newtheorem{theorem}{Theorem}

\newtheorem{proposition}[theorem]{Proposition}

\newtheorem{remark}{Remark}

\title{
From Winding-Fault Geometry to Reliability: Estimation and Prognosis of Stator Inter-Turn Faults
}

\author{
  Bambang L. Widjiantoro, Katherin Indriawati, Moh Kamalul Wafi \\
  Department of Engineering Physics \\
  Institut Teknologi Sepuluh Nopember \\
  Surabaya, Jawa Timur, Indonesia\\
  \texttt{\{bambang.lw, katherin, kamalul.wafi\}@its.ac.id} \\
   \And
  Syahrul Munir \\
  Physics Department \\
  Universitas Pembangunan Nasional "Veteran" Jawa Timur \\
  Surabaya, Jawa Timur, Indonesia \\
  \texttt{syahrul.munir.fisika@upnjatim.ac.id} \\
}

\begin{document}
\maketitle

\begin{abstract}
This paper develops an integrated framework for estimation, prognosis, and
reliability assessment of stator inter-turn faults in induction motors.
The fault is characterized by its severity, defined as the fraction of
short-circuited turns, and its spatial orientation. A geometric fault model
shows that the resulting output signature is affine in the fault severity
and exhibits a fundamental spatial periodicity. Exploiting this structure,
an augmented-state particle filter jointly estimates the nonlinear
electromechanical state and the unknown fault severity while quantifying
posterior uncertainty. The estimated degradation is then propagated using
four prognostic models: linear trend, Holt exponential smoothing, Bayesian
degradation, and particle-based forecasting. Their predictions are connected
to threshold-crossing remaining useful life (RUL), first-passage reliability,
degradation-dependent hazard reliability, and a Weibull lifetime benchmark,
thereby providing both deterministic and probabilistic health assessments.
Numerical results demonstrate accurate online fault estimation and output
reconstruction, characterize the effects of degradation pattern and
prediction horizon on prognosis, and show consistent reliability and
threshold-crossing predictions. Moreover, estimation accuracy remains
comparable across the distinct fault orientations induced by the spatial
periodicity. The resulting framework provides a unified connection from
physics-based inter-turn fault modeling to online diagnosis, degradation
prognosis, and reliability assessment.
\end{abstract}
\allowdisplaybreaks

\keywords{
Stator Inter-Turn Fault \and
Induction Motor \and
Particle Filtering \and
Fault Prognosis \and
Remaining Useful Life \and
Reliability Assessment \and
Predictive Maintenance
}

\section{Introduction}
\label{sec:introduction}

Induction motors remain central to industrial electromechanical systems, and
their reliable operation has motivated extensive research on condition
monitoring and fault diagnosis \cite{R1,WAFI-JRC,R3,R4}. Among electrical failure
modes, stator-winding faults are particularly important because insulation
degradation may initiate an inter-turn short circuit (ITSC), whose continued
development can produce local heating and progressively more severe winding
damage \cite{R5,R6,R7}. Consequently, early detection and severity
quantification of ITSC faults have received considerable attention
\cite{R8,R9,Wafi-CoDIT,R11}. Detecting the presence of a fault, however, addresses
only the current condition of the machine. Condition-based and predictive
maintenance additionally require prediction of future degradation, remaining
useful life (RUL), and the associated uncertainty and failure risk
\cite{R12,R13,R14,R15}.

A central difficulty is that a stator ITSC has both a \emph{magnitude} and a
\emph{spatial structure}. Existing fault models commonly characterize the
fault through quantities related to the fraction of short-circuited turns
together with the location of the faulty winding \cite{R8,R9,R16,R17}.
Accordingly, the number of affected turns determines the severity of the
winding degradation, while the fault location influences how the resulting
electrical asymmetry appears in measured currents and voltages. This
dependence has motivated analytical machine models, parameter-identification
methods, and signal-based techniques for detecting, locating, and quantifying
stator ITSC faults \cite{R6,R7,Wafi-CoDIT,R11,R16}. A useful health variable for
prognosis should therefore remain physically interpretable while retaining
the connection between fault severity, spatial orientation, and the measured
motor response.

Model-based fault diagnosis provides a natural mechanism for exploiting this
physical structure. Observer- and parameter-estimation approaches have been
used to reconstruct winding faults from measured motor variables
\cite{R9,R18,R19}, while Bayesian state estimation provides a systematic
framework for representing uncertainty in unknown states and parameters.
Particle filtering, in particular, approximates the recursive Bayesian
posterior by a weighted particle population and is applicable to nonlinear
and non-Gaussian state-space models \cite{Wafi-Elham,R21}. This property has also
made particle filtering attractive for prognostics, where the posterior
health state can be propagated forward to obtain distributions of future
degradation and RUL \cite{R22,R23,R24,R25}.

Prognosis nevertheless introduces a modeling problem distinct from online
fault estimation. Future degradation must be inferred without future
measurements, and the resulting prediction can depend strongly on the assumed
degradation dynamics and prediction horizon \cite{R14,R15,R26}. Local trend
extrapolation and exponential smoothing provide computationally simple
deterministic forecasts \cite{R27,Wafi-AIP,Wafi-AIP-arXiv}, whereas stochastic degradation
models permit uncertainty in the degradation rate and future evolution to be
carried explicitly into RUL prediction \cite{R26,R29,R30}. Such uncertainty
is particularly important for long-horizon prognosis, for which a single
point estimate may conceal substantial variability in the predicted
threshold-crossing time \cite{R15,R22}.

Reliability provides the subsequent connection between predicted degradation
and failure risk. When failure is associated with a critical degradation
level, RUL can be formulated through the first-passage or first-hitting time
of the degradation process \cite{R29,R30}. Alternatively, condition
information can enter reliability through a degradation-dependent hazard
model, allowing the predicted health state to modify the instantaneous
failure rate \cite{R31,R32}. Classical lifetime models such as the Weibull
distribution provide a complementary population-level description through
their survival and hazard functions \cite{Wafi-Quadruple}. These formulations answer
different questions: threshold crossing concerns when a prescribed
degradation boundary is reached, first-passage reliability concerns survival
of the complete future degradation path, and hazard-based reliability
accumulates condition-dependent failure risk. More generally, uncertainty
propagation from diagnosis through prognosis to reliability remains an
important issue in practical PHM \cite{R15,R34}.

Motivated by these considerations, this paper develops an integrated
estimation--prognosis--reliability framework around a common physical health
variable. Let $\eta_{cc}\in[0,1]$ denote the fraction of short-circuited
stator turns and let $\gamma_{cc}$ denote the spatial orientation of the
equivalent short-circuit winding. The fault-induced current is represented
through a geometric orientation model and factorized into the product of
$\eta_{cc}$ and an orientation- and operating-condition-dependent signature.
This yields a measured-output model that is affine in the unknown fault
severity. The same variable $\eta_{cc}$ is then carried throughout the
remainder of the framework: it is estimated online, propagated into the
future, and ultimately mapped into RUL and reliability quantities.

The main contributions of this paper are summarized as follows:
\begin{itemize}
    \item A geometric representation of stator inter-turn faults is developed
    in terms of the physically interpretable pair
    $(\eta_{cc},\gamma_{cc})$. The resulting fault contribution is affine in
    $\eta_{cc}$, and its spatial periodicity is characterized, identifying the
    distinct fault orientations represented by the first-harmonic model.

    \item An augmented-state particle-filter formulation is developed for
    joint estimation of the nonlinear electromechanical state and the
    constrained fault severity $\eta_{cc}\in[0,1]$. The posterior particle
    population provides both an online severity estimate and its uncertainty,
    rather than only a point-valued fault indicator.

    \item Four prognosis mechanisms are formulated on the same PF-based
    health information: linear trend, Holt exponential smoothing, Bayesian
    degradation, and particle-based forecasting. This provides deterministic
    and probabilistic descriptions of future winding degradation while
    separating the estimation stage from assumptions concerning long-term
    fault evolution.

    \item The resulting forecasts are connected to complementary reliability
    measures, including threshold-crossing RUL, first-passage reliability,
    degradation-dependent hazard reliability, and a Weibull lifetime
    benchmark. For probabilistic forecasts, both pathwise threshold survival
    and uncertainty-averaged degradation hazards are considered.

    \item Numerical studies evaluate the complete framework under gradual,
    accelerating, and abrupt degradation, different prediction horizons, and
    distinct fault orientations. The results demonstrate accurate online
    severity estimation and output reconstruction, quantify the effect of
    degradation dynamics on prognosis, and illustrate how different
    reliability formulations interpret the same predicted winding condition.
\end{itemize}

The remainder of the paper is organized as follows.
Section~\ref{sec:electromechanical_model} presents the induction-motor model.
Section~\ref{sec:fault_model} develops the stator inter-turn fault model and
its spatial representation. Section~\ref{sec:particle_filtering} introduces
the augmented-state particle filter. Section~\ref{sec:fault_prognosis}
develops the four fault-prognosis methods, and
Section~\ref{sec:reliability} formulates the corresponding reliability
measures. Section~\ref{sec:numerical_results} presents the numerical results,
and the paper is concluded in the final section.

\section{Electromechanical Model of the Induction Motor}
\label{sec:electromechanical_model}

This section derives the nominal electromechanical model of the induction
motor in the synchronous $(d,q)$ reference frame. The resulting model
describes the common-mode dynamics of the machine and provides the baseline
representation upon which the stator winding-fault model is constructed in
the following section. The derivation begins with the stator (subscript $s$)
and rotor (subscript $r$) voltage equations, followed by the corresponding
flux-linkage relations. The rotor currents are then eliminated to obtain
differential equations in terms of the stator currents and rotor fluxes.
Finally, the electrical dynamics are coupled with the mechanical equation
of the rotor to obtain the complete nonlinear state-space model.

The stator voltage equations in the synchronous $(d,q)$ reference frame are
\begin{subequations}
\begin{align}
    V_{sd}
    &=
    R_s i_{sd}
    +
    \frac{d}{dt}\lambda_{sd}
    -
    \omega_s \lambda_{sq},
    \label{eq:vsd_basic}
    \\
    V_{sq}
    &=
    R_s i_{sq}
    +
    \frac{d}{dt}\lambda_{sq}
    +
    \omega_s \lambda_{sd},
    \label{eq:vsq_basic}
\end{align}
\end{subequations}
where $V_{sd}$ and $V_{sq}$ denote the stator voltages along the $d$- and
$q$-axes, respectively; $i_{sd}$ and $i_{sq}$ are the corresponding stator
currents; $\lambda_{sd}$ and $\lambda_{sq}$ are the stator flux linkages; and
$R_s>0$ is the stator resistance. The quantity $\omega_s$ denotes the
synchronous electrical angular velocity of the rotating reference frame.

Similarly, the rotor voltage equations in the synchronous $(d,q)$ reference
frame are
\begin{subequations}
\begin{align}
    V_{rd}
    &=
    R_r i_{rd}
    +
    \frac{d}{dt}\lambda_{rd}
    -
    (\omega_s-p\omega_r)\lambda_{rq},
    \label{eq:vrd_basic}
    \\
    V_{rq}
    &=
    R_r i_{rq}
    +
    \frac{d}{dt}\lambda_{rq}
    +
    (\omega_s-p\omega_r)\lambda_{rd},
    \label{eq:vrq_basic}
\end{align}
\end{subequations}
where $V_{rd}$ and $V_{rq}$ denote the rotor voltages along the $d$- and
$q$-axes, respectively; $i_{rd}$ and $i_{rq}$ are the corresponding rotor
currents; $\lambda_{rd}$ and $\lambda_{rq}$ are the rotor flux linkages;
$R_r>0$ is the rotor resistance; $p$ is the number of pole pairs; and
$\omega_r$ is the mechanical angular velocity of the rotor. Consequently,
$p\omega_r$ represents the rotor electrical angular velocity, and
$\omega_s-p\omega_r$ is the relative electrical angular velocity between
the synchronous reference frame and the rotor.

For a squirrel-cage induction motor, the rotor windings are short-circuited
through the end rings and no external voltage is applied to the rotor.
Therefore, $V_{rd}=V_{rq}=0.$
It follows from \eqref{eq:vrd_basic}--\eqref{eq:vrq_basic} that the rotor
flux dynamics satisfy
\begin{subequations}
\begin{align}
    \frac{d}{dt}\lambda_{rd}
    &=
    -R_r i_{rd}
    +
    (\omega_s-p\omega_r)\lambda_{rq},
    \label{eq:lrd_basic}
    \\
    \frac{d}{dt}\lambda_{rq}
    &=
    -R_r i_{rq}
    -
    (\omega_s-p\omega_r)\lambda_{rd}.
    \label{eq:lrq_basic}
\end{align}
\end{subequations}
The rotor flux dynamics in \eqref{eq:lrd_basic}--\eqref{eq:lrq_basic}
still depend on the rotor currents $i_{rd}$ and $i_{rq}$. To express these
dynamics in terms of the stator currents and rotor flux linkages, we next
use the magnetic coupling relations between the stator and rotor circuits.
The stator and rotor flux linkages satisfy
\begin{subequations}
\begin{align}
    \lambda_{sd}
    &=
    L_s i_{sd}
    +
    L_m i_{rd},
    \label{eq:lsd}
    \\
    \lambda_{sq}
    &=
    L_s i_{sq}
    +
    L_m i_{rq},
    \label{eq:lsq}
    \\
    \lambda_{rd}
    &=
    L_r i_{rd}
    +
    L_m i_{sd},
    \label{eq:lrd}
    \\
    \lambda_{rq}
    &=
    L_r i_{rq}
    +
    L_m i_{sq},
    \label{eq:lrq}
\end{align}
\end{subequations}
where $L_s>0$ and $L_r>0$ denote the stator and rotor self-inductances,
respectively, while $L_m>0$ denotes the mutual inductance between the
stator and rotor windings. In particular, the terms involving $L_m$
represent the magnetic coupling between the stator and rotor circuits.
Since the rotor currents are not selected as state variables, we eliminate
$i_{rd}$ and $i_{rq}$ from the subsequent dynamics. Solving
\eqref{eq:lrd} and \eqref{eq:lrq} for the rotor currents gives
\begin{subequations}
\begin{align}
    i_{rd}
    &=
    \frac{1}{L_r}
    \left(
    \lambda_{rd}-L_m i_{sd}
    \right),
    \label{eq:ird}
    \\
    i_{rq}
    &=
    \frac{1}{L_r}
    \left(
    \lambda_{rq}-L_m i_{sq}
    \right).
    \label{eq:irq}
\end{align}
\end{subequations}
Substituting \eqref{eq:ird} and \eqref{eq:irq} into
\eqref{eq:lrd_basic} and \eqref{eq:lrq_basic}, respectively, eliminates
the rotor currents from the rotor flux dynamics and gives
\begin{subequations}
\begin{align}
    \frac{d}{dt}\lambda_{rd}
    &=
    \frac{R_r}{L_r}
    \left(
    L_m i_{sd}-\lambda_{rd}
    \right)
    +
    (\omega_s-p\omega_r)\lambda_{rq},
    \label{eq:lrd_final}
    \\
    \frac{d}{dt}\lambda_{rq}
    &=
    \frac{R_r}{L_r}
    \left(
    L_m i_{sq}-\lambda_{rq}
    \right)
    -
    (\omega_s-p\omega_r)\lambda_{rd}.
    \label{eq:lrq_final}
\end{align}
\end{subequations}
Thus, the rotor flux dynamics are expressed entirely in terms of the stator
currents $i_{sd}$ and $i_{sq}$, the rotor flux linkages $\lambda_{rd}$ and
$\lambda_{rq}$, and the rotor angular velocity $\omega_r$, without requiring
the rotor currents $i_{rd}$ and $i_{rq}$ as additional dynamic variables.

We next derive the dynamics of the stator currents. Starting with the
$d$-axis, substituting the stator flux relation \eqref{eq:lsd} into
\eqref{eq:vsd_basic} gives
\begin{align}
    V_{sd}
    &=
    R_s i_{sd}
    +
    \frac{d}{dt}
    \left(
    L_s i_{sd}+L_m i_{rd}
    \right)
    -
    \omega_s\lambda_{sq}
    \nonumber\\
    &=
    R_s i_{sd}
    +
    L_s\frac{d}{dt}i_{sd}
    +
    L_m\frac{d}{dt}i_{rd}
    -
    \omega_s\lambda_{sq}.
    \label{eq:vsd_step1}
\end{align}
The resulting expression contains the derivative of the rotor current
$i_{rd}$. To eliminate this quantity, differentiating \eqref{eq:ird}
with respect to time yields
\begin{equation}
    \frac{d}{dt}i_{rd}
    =
    \frac{1}{L_r}
    \left(
    \frac{d}{dt}\lambda_{rd}
    -
    L_m\frac{d}{dt}i_{sd}
    \right).
    \label{eq:dird}
\end{equation}
Substituting \eqref{eq:dird} into \eqref{eq:vsd_step1} eliminates
$\dot{i}_{rd}$ and gives
\begin{align}
    V_{sd}
    &=
    R_s i_{sd}
    +
    L_s\frac{d}{dt}i_{sd}
    +
    \frac{L_m}{L_r}\frac{d}{dt}\lambda_{rd}
    -
    \frac{L_m^2}{L_r}\frac{d}{dt}i_{sd}
    -
    \omega_s\lambda_{sq}
    \nonumber\\
    &=
    R_s i_{sd}
    +
    \left(
    L_s-\frac{L_m^2}{L_r}
    \right)
    \frac{d}{dt}i_{sd}
    +
    \frac{L_m}{L_r}\frac{d}{dt}\lambda_{rd}
    -
    \omega_s\lambda_{sq}.
    \label{eq:vsd_step2}
\end{align}
The expression in \eqref{eq:vsd_step2} still contains
$\dot{\lambda}_{rd}$. Using the rotor flux dynamics derived in
\eqref{eq:lrd_final}, we obtain
\begin{align}
    V_{sd}
    &=
    R_s i_{sd}
    +
    \left(
    L_s-\frac{L_m^2}{L_r}
    \right)
    \frac{d}{dt}i_{sd}
    +
    \frac{L_m}{L_r}
    \left[
    \frac{R_r}{L_r}
    \left(
    L_m i_{sd}-\lambda_{rd}
    \right)
    +
    (\omega_s-p\omega_r)\lambda_{rq}
    \right]
    -
    \omega_s\lambda_{sq}
    \nonumber\\
    &=
    R_s i_{sd}
    +
    \left(
    L_s-\frac{L_m^2}{L_r}
    \right)
    \frac{d}{dt}i_{sd}
    +
    \frac{L_m^2R_r}{L_r^2}i_{sd}
    -
    \frac{L_mR_r}{L_r^2}\lambda_{rd}
    +
    \frac{L_m}{L_r}\omega_s\lambda_{rq}
    -
    \frac{L_m}{L_r}p\omega_r\lambda_{rq}
    -
    \omega_s\lambda_{sq}.
    \label{eq:vsd_step3}
\end{align}
Since
\begin{equation*}
    \lambda_{sq}
    =
    L_s i_{sq}
    +
    L_m i_{rq}
    =
    L_s i_{sq}
    +
    \frac{L_m}{L_r}
    \left(
    \lambda_{rq}-L_m i_{sq}
    \right),
\end{equation*}
we have
\begin{equation}
    \lambda_{sq}
    =
    \left(
    L_s-\frac{L_m^2}{L_r}
    \right)i_{sq}
    +
    \frac{L_m}{L_r}\lambda_{rq}.
    \label{eq:lsq_reduced}
\end{equation}
Substituting \eqref{eq:lsq_reduced} into \eqref{eq:vsd_step3}
eliminates $\lambda_{sq}$. Moreover, the two terms containing
$\omega_s\lambda_{rq}$ cancel, resulting in
\begin{align}
    V_{sd}
    &=
    R_s i_{sd}
    +
    \left(
    L_s-\frac{L_m^2}{L_r}
    \right)
    \frac{d}{dt}i_{sd}
    +
    \frac{L_m^2R_r}{L_r^2}i_{sd}
    -
    \frac{L_mR_r}{L_r^2}\lambda_{rd}
    \nonumber\\
    &\quad
    -
    \frac{L_m}{L_r}p\omega_r\lambda_{rq}
    -
    \omega_s
    \left(
    L_s-\frac{L_m^2}{L_r}
    \right)i_{sq}.
    \label{eq:vsd_step4}
\end{align}
To simplify the resulting expressions, define the leakage coefficient
\begin{equation}
    \sigma
    =
    \frac{L_sL_r-L_m^2}{L_sL_r},
    \label{eq:sigma}
    \end{equation}
    which satisfies
    \begin{equation}
    L_s-\frac{L_m^2}{L_r}
    =
    \sigma L_s.
\end{equation}
Using this identity in \eqref{eq:vsd_step4} and solving for
$\dot{i}_{sd}$ yields the $d$-axis stator current dynamics
\begin{align}
    \frac{d}{dt}i_{sd}
    &=
    \frac{V_{sd}}{\sigma L_s}
    -
    \frac{L_r^2R_s+L_m^2R_r}{\sigma L_sL_r^2}i_{sd}
    +
    \omega_s i_{sq}
    +
    \frac{L_mR_r}{\sigma L_sL_r^2}\lambda_{rd}
    +
    \frac{pL_m\omega_r}{\sigma L_sL_r}\lambda_{rq}.
    \label{eq:isd_final}
\end{align}

We proceed similarly for the $q$-axis stator current. Substituting the
stator flux relation \eqref{eq:lsq} into the $q$-axis stator voltage
equation \eqref{eq:vsq_basic} gives
\begin{align}
    V_{sq}
    &=
    R_s i_{sq}
    +
    \frac{d}{dt}
    \left(
    L_s i_{sq}+L_m i_{rq}
    \right)
    +
    \omega_s\lambda_{sd}
    \nonumber\\
    &=
    R_s i_{sq}
    +
    L_s\frac{d}{dt}i_{sq}
    +
    L_m\frac{d}{dt}i_{rq}
    +
    \omega_s\lambda_{sd}.
    \label{eq:vsq_step1}
\end{align}
The resulting equation contains the derivative of the rotor current
$i_{rq}$. Differentiating \eqref{eq:irq} with respect to time gives
\begin{equation}
    \frac{d}{dt}i_{rq}
    =
    \frac{1}{L_r}
    \left(
    \frac{d}{dt}\lambda_{rq}
    -
    L_m\frac{d}{dt}i_{sq}
    \right).
    \label{eq:dirq}
\end{equation}
Substituting \eqref{eq:dirq} into \eqref{eq:vsq_step1} eliminates
$\dot{i}_{rq}$ and yields
\begin{align}
    V_{sq}
    &=
    R_s i_{sq}
    +
    L_s\frac{d}{dt}i_{sq}
    +
    \frac{L_m}{L_r}\frac{d}{dt}\lambda_{rq}
    -
    \frac{L_m^2}{L_r}\frac{d}{dt}i_{sq}
    +
    \omega_s\lambda_{sd}
    \nonumber\\
    &=
    R_s i_{sq}
    +
    \left(
    L_s-\frac{L_m^2}{L_r}
    \right)
    \frac{d}{dt}i_{sq}
    +
    \frac{L_m}{L_r}\frac{d}{dt}\lambda_{rq}
    +
    \omega_s\lambda_{sd}.
    \label{eq:vsq_step2}
\end{align}
The remaining rotor flux derivative $\dot{\lambda}_{rq}$ can be
eliminated using the rotor flux dynamics in \eqref{eq:lrq_final}.
Substitution into \eqref{eq:vsq_step2} gives
\begin{align}
    V_{sq}
    &=
    R_s i_{sq}
    +
    \left(
    L_s-\frac{L_m^2}{L_r}
    \right)
    \frac{d}{dt}i_{sq}
    +
    \frac{L_m}{L_r}
    \left[
    \frac{R_r}{L_r}
    \left(
    L_m i_{sq}-\lambda_{rq}
    \right)
    -
    (\omega_s-p\omega_r)\lambda_{rd}
    \right]
    +
    \omega_s\lambda_{sd}
    \nonumber\\
    &=
    R_s i_{sq}
    +
    \left(
    L_s-\frac{L_m^2}{L_r}
    \right)
    \frac{d}{dt}i_{sq}
    +
    \frac{L_m^2R_r}{L_r^2}i_{sq}
    -
    \frac{L_mR_r}{L_r^2}\lambda_{rq}
    -
    \frac{L_m}{L_r}\omega_s\lambda_{rd}
    +
    \frac{L_m}{L_r}p\omega_r\lambda_{rd}
    +
    \omega_s\lambda_{sd}.
    \label{eq:vsq_step3}
\end{align}
Since
\begin{equation*}
    \lambda_{sd}
    =
    L_s i_{sd}
    +
    L_m i_{rd}
    =
    L_s i_{sd}
    +
    \frac{L_m}{L_r}
    \left(
    \lambda_{rd}-L_m i_{sd}
    \right),
\end{equation*}
we have
\begin{equation}
    \lambda_{sd}
    =
    \left(
    L_s-\frac{L_m^2}{L_r}
    \right)i_{sd}
    +
    \frac{L_m}{L_r}\lambda_{rd}.
    \label{eq:lsd_reduced}
\end{equation}
Substituting \eqref{eq:lsd_reduced} into \eqref{eq:vsq_step3}
eliminates $\lambda_{sd}$. The terms containing
$\omega_s\lambda_{rd}$ cancel, and hence
\begin{align}
    V_{sq}
    &=
    R_s i_{sq}
    +
    \left(
    L_s-\frac{L_m^2}{L_r}
    \right)
    \frac{d}{dt}i_{sq}
    +
    \frac{L_m^2R_r}{L_r^2}i_{sq}
    -
    \frac{L_mR_r}{L_r^2}\lambda_{rq}
    \nonumber\\
    &\quad
    +
    \frac{L_m}{L_r}p\omega_r\lambda_{rd}
    +
    \omega_s
    \left(
    L_s-\frac{L_m^2}{L_r}
    \right)i_{sd}.
    \label{eq:vsq_step4}
\end{align}
Using $L_s-L_m^2/L_r=\sigma L_s$
and solving \eqref{eq:vsq_step4} for $\dot{i}_{sq}$ yields the
$q$-axis stator current dynamics
\begin{align}
    \frac{d}{dt}i_{sq}
    &=
    \frac{V_{sq}}{\sigma L_s}
    -
    \frac{L_r^2R_s+L_m^2R_r}{\sigma L_sL_r^2}i_{sq}
    -
    \omega_s i_{sd}
    +
    \frac{L_mR_r}{\sigma L_sL_r^2}\lambda_{rq}
    -
    \frac{pL_m\omega_r}{\sigma L_sL_r}\lambda_{rd}.
    \label{eq:isq_final}
\end{align}

Having derived the stator current and rotor flux dynamics, it remains to
describe the mechanical motion of the rotor. The electrical and mechanical
dynamics are coupled through the electromagnetic torque, which is given by
\begin{equation}
    T_e
    =
    p\frac{L_m}{L_r}
    \left(
    \lambda_{rd}i_{sq}
    -
    \lambda_{rq}i_{sd}
    \right),
    \label{eq:torque}
\end{equation}
where $T_e$ denotes the electromagnetic torque generated by the motor.
The rotor angular velocity evolves according to the mechanical torque
balance
\begin{equation}
    J\frac{d}{dt}\omega_r
    =
    T_e-T_l,
    \label{eq:mechanical_basic}
\end{equation}
where $J>0$ denotes the total moment of inertia of the rotor and load, and
$T_l$ is the external load torque acting against the electromagnetic torque.
Equivalently,
\begin{equation}
    \frac{d}{dt}\omega_r
    =
    \frac{T_e-T_l}{J}.
\end{equation}
Substituting \eqref{eq:torque} into the mechanical equation eliminates
$T_e$ and yields the rotor-speed dynamics
\begin{equation}
    \frac{d}{dt}\omega_r
    =
    \frac{pL_m}{JL_r}
    \left(
    \lambda_{rd}i_{sq}
    -
    \lambda_{rq}i_{sd}
    \right)
    -
    \frac{T_l}{J}.
    \label{eq:wr_final}
\end{equation}

We have now obtained differential equations for the two stator currents,
the two rotor flux linkages, and the rotor angular velocity. Collecting
\eqref{eq:isd_final}, \eqref{eq:isq_final}, \eqref{eq:lrd_final},
\eqref{eq:lrq_final}, and \eqref{eq:wr_final}, the complete nonlinear
induction motor dynamics are given by
\begin{subequations}
\begin{align}
    \dot{i}_{sd}
    &=
    \frac{V_{sd}}{\sigma L_s}
    -
    \frac{L_r^2R_s+L_m^2R_r}{\sigma L_sL_r^2}i_{sd}
    +
    \omega_s i_{sq}
    +
    \frac{L_mR_r}{\sigma L_sL_r^2}\lambda_{rd}
    +
    \frac{pL_m\omega_r}{\sigma L_sL_r}\lambda_{rq},
    \\
    \dot{i}_{sq}
    &=
    \frac{V_{sq}}{\sigma L_s}
    -
    \frac{L_r^2R_s+L_m^2R_r}{\sigma L_sL_r^2}i_{sq}
    -
    \omega_s i_{sd}
    +
    \frac{L_mR_r}{\sigma L_sL_r^2}\lambda_{rq}
    -
    \frac{pL_m\omega_r}{\sigma L_sL_r}\lambda_{rd},
    \\
    \dot{\lambda}_{rd}
    &=
    \frac{R_r}{L_r}
    \left(
    L_m i_{sd}-\lambda_{rd}
    \right)
    +
    (\omega_s-p\omega_r)\lambda_{rq},
    \\
    \dot{\lambda}_{rq}
    &=
    \frac{R_r}{L_r}
    \left(
    L_m i_{sq}-\lambda_{rq}
    \right)
    -
    (\omega_s-p\omega_r)\lambda_{rd},
    \\
    \dot{\omega}_r
    &=
    \frac{pL_m}{JL_r}
    \left(
    \lambda_{rd}i_{sq}
    -
    \lambda_{rq}i_{sd}
    \right)
    -
    \frac{T_l}{J}.
    \end{align}
    \label{eq:im_full_model}
\end{subequations}

Finally, the nonlinear dynamics in \eqref{eq:im_full_model} can be expressed in
compact state-space form. Define the state vector
$x=[i_{sd},\,i_{sq},\,\lambda_{rd},\,\lambda_{rq},\,\omega_r]^{\top}
\in\mathbb{R}^{5}$, which consists of the two stator currents, the two rotor
flux linkages, and the rotor angular velocity. The stator voltages constitute
the input vector
$u=[V_{sd},\,V_{sq}]^{\top}\in\mathbb{R}^{2}$, while the load torque
$T_l(t)\in\mathbb{R}$ is treated as an external mechanical input to the motor
dynamics.

For the estimation framework developed in the following sections, the
available measurements are assumed to be the two stator currents and the
rotor angular velocity. Accordingly, the measured output is
$y=[i_{sd},\,i_{sq},\,\omega_r]^{\top}\in\mathbb{R}^{3}$. The induction
motor model can therefore be written compactly as
\begin{subequations}
\begin{align}
    \dot{x}(t)
    &=
    f(x(t),u(t),T_l(t)),
    \label{eq:im_compact_dynamics}
    \\
    y(t)
    &=
    Cx(t),
    \label{eq:im_compact_output}
\end{align}
\end{subequations}
where $C\in\mathbb{R}^{3\times5}$ is the output matrix, and the nonlinear
vector field
$f:\mathbb{R}^{5}\times\mathbb{R}^{2}\times\mathbb{R}\to\mathbb{R}^{5}$
is defined by the right-hand side of \eqref{eq:im_full_model}.

The model in
\eqref{eq:im_compact_dynamics}--\eqref{eq:im_compact_output}
constitutes the nominal, or common-mode, representation of the induction
motor. It describes the underlying electromechanical dynamics of the machine
independently of the winding-fault variables. Rather than modifying these
nominal dynamics directly, the stator inter-turn fault is characterized in
the following section through a differential-mode representation
parameterized by the fault severity and spatial location. This separation
provides a physically interpretable basis for the subsequent fault-estimation
and prognosis framework.

\section{Stator Inter-Turn Fault Model}
\label{sec:fault_model}

A stator inter-turn fault originates from the deterioration of the insulation
between neighboring turns of a stator winding. Once electrical contact occurs,
a subset of the winding turns becomes short-circuited and forms a closed
conducting path. The magnetic field generated by the machine then induces a
circulating current through this path. Since the resistance of the
short-circuited turns is relatively small, this circulating current may become
significant even at an early stage of the fault. The resulting local heating
can further damage the insulation and thereby promote progressive winding
degradation.

The nominal model developed in the previous section describes the common-mode
electromechanical dynamics of the induction motor. To characterize the
additional electrical behavior produced by the winding short circuit, we
introduce a differential-mode fault representation. The key idea is to replace
the localized short-circuited turns by an equivalent symmetrized short-circuit
winding whose first-harmonic electromagnetic effect is characterized by two
physically meaningful parameters: the fault severity $\eta_{cc}$ and the
spatial orientation $\gamma_{cc}$ of the short-circuit winding.

\subsection{Fault Severity and Spatial Orientation}

Let $n_s$ denote the number of turns of a healthy stator phase and let
$n_{cc}$ denote the number of turns involved in the short circuit. The
dimensionless fault-severity parameter is defined as
\begin{equation}
    \eta_{cc}
    :=
    \frac{n_{cc}}{n_s},
    \qquad
    0\leq \eta_{cc}\leq 1.
    \label{eq:fault_severity}
\end{equation}
Thus, $\eta_{cc}=0$ corresponds to the healthy winding, whereas
$\eta_{cc}>0$ indicates the presence of an inter-turn short circuit.
Increasing values of $\eta_{cc}$ represent an increasing fraction of
short-circuited turns and therefore provide a natural degradation variable
for the subsequent prognosis and reliability analysis.

The spatial orientation of the equivalent short-circuit winding is described
by the mechanical angular parameter $\gamma_{cc}$. Accordingly,
$p\gamma_{cc}$ represents its corresponding electrical orientation, where
$p$ is the number of pole pairs. In contrast to a phase-level description,
$\gamma_{cc}$ identifies the symmetry axis of the equivalent short-circuit
winding and therefore characterizes its spatial orientation.
For a stator containing $n_e$ uniformly distributed slots, the admissible spatial orientations may be represented by the finite set
\begin{equation}
    \gamma_{cc}
    \in
    \left\{
    k\frac{2\pi}{n_e}:
    k=0,\ldots,n_e-1
    \right\}.
    \label{eq:fault_orientation_set}
\end{equation}
The parameter $\eta_{cc}$ therefore quantifies \emph{how severe} the winding
fault is, while $\gamma_{cc}$ specifies the spatial orientation of the
corresponding short-circuit axis.

Although \eqref{eq:fault_orientation_set} contains $n_e$ admissible slot
orientations, the first-harmonic fault representation possesses an additional
spatial periodicity. Consequently, distinct physical slot orientations may
generate identical fault signatures, as characterized in the following
subsection.

\subsection{Geometric Representation of the Fault Orientation}

Having characterized the fault severity $\eta_{cc}$ and spatial orientation
$\gamma_{cc}$, we next examine how $\gamma_{cc}$ determines the
fault-induced current $i_{cc,dq}$. In particular, the spatial orientation
$\gamma_{cc}$ must be represented through a directional projection
$\mathcal{Q}(\cdot)$ and subsequently expressed in the synchronous $(d,q)$
reference frame through the rotational transformation $\mathcal{P}(\cdot)$.

To account for the change of orientation in the $(d,q)$ plane, define the
rotational transformation
\begin{equation}
    \mathcal{P}(\theta)
    :=
    \begin{bmatrix}
        \cos\theta & -\sin\theta\\
        \sin\theta &  \cos\theta
    \end{bmatrix}
    \in\mathbb{R}^{2\times2}.
    \label{eq:P_matrix}
\end{equation}
The matrix $\mathcal{P}(\theta)$ rotates a $(d,q)$ vector through the angle
$\theta$, allowing the stator-fixed short-circuit winding orientation to be
expressed in the synchronous $(d,q)$ reference frame.

The spatial orientation of the short-circuit winding is represented through
the directional operator
\begin{equation}
    \mathcal{Q}(\theta)
    :=
    \begin{bmatrix}
        \cos^2\theta &
        \cos\theta\sin\theta\\
        \cos\theta\sin\theta &
        \sin^2\theta
    \end{bmatrix}
    \in\mathbb{R}^{2\times2}.
    \label{eq:Q_matrix}
\end{equation}
Unlike $\mathcal{P}(\theta)$, which rotates a $(d,q)$ vector through the
angle $\theta$, $\mathcal{Q}(\theta)$ selects the component aligned with
the short-circuit winding axis specified by $\theta$. To make this
directional role explicit, define
$q(\theta):=[\cos\theta,\,\sin\theta]^{\top}$, such that
$\mathcal{Q}(\theta)=q(\theta)q^{\top}(\theta)$.
Since $\|q(\theta)\|=1$, it follows immediately that
$\mathcal{Q}^{\top}(\theta)=\mathcal{Q}(\theta)$,
$\mathcal{Q}^{2}(\theta)=\mathcal{Q}(\theta)$, and
$\operatorname{rank}\mathcal{Q}(\theta)=1$. Thus,
$\mathcal{Q}(\theta)$ is the orthogonal projection onto
$\operatorname{span}\{q(\theta)\}$ and its spectrum satisfies
$\operatorname{spec}(\mathcal{Q}(\theta))=\{1,0\}$.
In particular,
\begin{align*}
    \mathcal{Q}(\theta)q(\theta)
    &=q(\theta),
    \\
    \mathcal{Q}(\theta)q_{\perp}(\theta)
    &=0,
\end{align*}
where
$q_{\perp}(\theta):=[-\sin\theta,\,\cos\theta]^{\top}$.
Hence, $q(\theta)$ and $q_{\perp}(\theta)$ are the eigenvectors associated
with the eigenvalues $1$ and $0$, respectively. Therefore,
$\mathcal{Q}(\theta)$ preserves the component aligned with the
short-circuit winding axis while eliminating its orthogonal component.

For the stator inter-turn fault, the spatial orientation of the
short-circuit winding is specified by $\gamma_{cc}$, with
$p\gamma_{cc}$ denoting the corresponding electrical orientation.
Consequently, $\mathcal{Q}(p\gamma_{cc})$ represents this orientation as
a directional projection acting on the $(d,q)$ quantities.

The spatial orientation $\gamma_{cc}$ is fixed with respect to the stator,
whereas the synchronous $(d,q)$ reference frame rotates during operation.
Let $\gamma(t)$ denote the mechanical angular position of the synchronous
$(d,q)$ frame with respect to the stationary stator frame. Its corresponding
electrical angle is $p\gamma(t)$ and satisfies
\begin{equation}
    p\dot{\gamma}(t)=\omega_s,
    \label{eq:synchronous_frame_angle}
\end{equation}
where $\omega_s$ is the synchronous electrical angular velocity introduced
in the previous section. For constant $\omega_s$ and $\gamma(0)=0$, this gives
$\gamma(t)=\omega_s t/p$.

Using $\mathcal{P}(\cdot)$ to account for this reference-frame rotation and
$\mathcal{Q}(p\gamma_{cc})$ to represent the spatial orientation of the
short-circuit winding, the fault-induced current in the synchronous $(d,q)$
frame is
\begin{equation}
    i_{cc,dq}
    =
    \frac{2\eta_{cc}}{3R_s}
    \mathcal{P}(-p\gamma)
    \mathcal{Q}(p\gamma_{cc})
    \mathcal{P}(p\gamma)
    V_{dq},
    \label{eq:icc_dq}
\end{equation}
where $i_{cc,dq}\in\mathbb{R}^{2}$ is the current contribution associated
with the short-circuited winding,
$V_{dq}=[V_{sd},\,V_{sq}]^{\top}\in\mathbb{R}^{2}$ is the stator-voltage
vector, and $R_s>0$ is the stator resistance.
Equation~\eqref{eq:icc_dq} separates the roles of the two fault parameters:
$\eta_{cc}$ determines the magnitude of the fault-induced current, whereas
$\gamma_{cc}$ determines its spatial orientation through
$\mathcal{Q}(p\gamma_{cc})$.

More precisely, define
\begin{equation*}
    \mathcal{G}(\gamma,\gamma_{cc},V_{dq})
    :=
    \frac{2}{3R_s}
    \mathcal{P}(-p\gamma)
    \mathcal{Q}(p\gamma_{cc})
    \mathcal{P}(p\gamma)
    V_{dq}
    \in\mathbb{R}^{2}.
\end{equation*}
Then, the short-circuit contribution admits the compact factorization
\begin{equation}
    i_{cc,dq}
    =
    \eta_{cc}
    \mathcal{G}(\gamma,\gamma_{cc},V_{dq}).
    \label{eq:fault_factorization}
\end{equation}
Thus, $\mathcal{G}(\gamma,\gamma_{cc},V_{dq})$ represents the
fault-induced current per unit severity. For a fixed spatial orientation
and known operating condition, $i_{cc,dq}$ therefore depends linearly on
the fault severity $\eta_{cc}$.

\begin{proposition}[Spatial periodicity of the fault signature]
\label{prop:fault_spatial_periodicity}
The orientation matrix satisfies
$\mathcal{Q}(\theta+\pi)=\mathcal{Q}(\theta)$. Consequently, the
fault-induced current per unit severity is periodic with respect to the
fault orientation, namely,
\begin{equation}
    \mathcal{G}
    \left(
        \gamma,
        \gamma_{cc}+\frac{\pi}{p},
        V_{dq}
    \right)
    =
    \mathcal{G}
    \left(
        \gamma,
        \gamma_{cc},
        V_{dq}
    \right).
    \label{eq:fault_signature_periodicity}
\end{equation}
Hence, the corresponding output signature also satisfies
$\psi(t;\gamma_{cc}+\pi/p)=\psi(t;\gamma_{cc})$.
\end{proposition}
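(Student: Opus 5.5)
The plan is to prove the matrix identity first and then let it pass through the two maps built on top of $\mathcal{Q}$.

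For the first claim, use the factorization $\mathcal{Q}(\theta)=q(\theta)q^{\top}(\theta)$ from the preceding discussion. Since $\cos(\theta+\pi)=-\cos\theta$ and $\sin(\theta+\pi)=-\sin\theta$, we have $q(\theta+\pi)=-q(\theta)$. Hence
\[
\mathcal{Q}(\theta+\pi)=(-q(\theta))(-q(\theta))^{\top}=q(\theta)q^{\top}(\theta)=\mathcal{Q}(\theta).
\]
Equivalently, every entry of \eqref{eq:Q_matrix} is a product of two sign-flipping factors. This is also the geometric content: an orthogonal projection depends only on the line $\operatorname{span}\{q(\theta)\}$, not on the direction of $q(\theta)$.

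For \eqref{eq:fault_signature_periodicity}, substitute $\gamma_{cc}+\pi/p$ into the definition of $\mathcal{G}$. The electrical orientation becomes $p(\gamma_{cc}+\pi/p)=p\gamma_{cc}+\pi$. By the first claim, $\mathcal{Q}(p\gamma_{cc}+\pi)=\mathcal{Q}(p\gamma_{cc})$. The remaining factors $\tfrac{2}{3R_s}$, $\mathcal{P}(\pm p\gamma)$ and $V_{dq}$ do not involve $\gamma_{cc}$, so the two expressions for $\mathcal{G}$ coincide for every $\gamma$ and $V_{dq}$. In particular the identity holds pointwise in $t$ along $\gamma(t)$. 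By \eqref{eq:fault_factorization}, $i_{cc,dq}$ is then also invariant for every $\eta_{cc}$.

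The output signature $\psi$ is the step I expect to need the most care, because it has not yet been defined at this point in the paper. I will assume it is the per-unit-severity contribution of the fault to the measured output. That is, $y=Cx+\eta_{cc}\psi$, with $\psi(t;\gamma_{cc})$ obtained from $\mathcal{G}(\gamma(t),\gamma_{cc},V_{dq}(t))$ by a map (for example an injection into the current channels, or a linear filter) that depends on $\gamma_{cc}$ only through $\mathcal{G}$. Under that assumption, invariance of $\mathcal{G}$ for all $t$ transfers directly to $\psi$.

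If $\psi$ is instead defined through the nominal trajectory driven by inputs that are themselves independent of $\gamma_{cc}$, the same argument applies. The only dependence on $\gamma_{cc}$ still enters through $\mathcal{Q}(p\gamma_{cc})$. I would make this structural assumption explicit when writing the final argument.
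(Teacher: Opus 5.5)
Your proposal is correct and is essentially the paper's own proof: $q(\theta+\pi)=-q(\theta)$ gives $\mathcal{Q}(\theta+\pi)=\mathcal{Q}(\theta)$, and setting $\theta=p\gamma_{cc}$ in the definition of $\mathcal{G}$ gives the periodicity, since all other factors are independent of $\gamma_{cc}$. The paper derives the claim for $\psi$ from its subsequent definition $\psi(t;\gamma_{cc})=D\,\mathcal{G}(\gamma(t),\gamma_{cc},V_{dq}(t))$, where $D$ is a constant matrix injecting into the current channels; this is the case you anticipated, so your structural assumption holds and the hedge in your final paragraph is unnecessary.
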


\begin{proof}
Since $q(\theta+\pi)=-q(\theta)$ and
$\mathcal{Q}(\theta)=q(\theta)q^{\top}(\theta)$, it follows directly that
$\mathcal{Q}(\theta+\pi)=\mathcal{Q}(\theta)$. Setting
$\theta=p\gamma_{cc}$ in the definition of $\mathcal{G}$ gives
\eqref{eq:fault_signature_periodicity}. The corresponding periodicity of
$\psi(t;\gamma_{cc})$ follows from
\eqref{eq:fault_signature_vector}.
\end{proof}

\begin{remark}
\label{rem:distinct_fault_orientations}
Suppose that $n_e/(2p)\in\mathbb{N}$. By
Proposition~\ref{prop:fault_spatial_periodicity}, the admissible slot
orientations in \eqref{eq:fault_orientation_set} generate only
$N_{\gamma}=n_e/(2p)$ distinct first-harmonic fault signatures over the
fundamental interval $[0,\pi/p)$. In particular, for $n_e=36$ and $p=2$,
there are nine distinct orientations,
$\gamma_{cc}\in\{0^\circ,10^\circ,\ldots,80^\circ\}$.
\end{remark}

\subsection{Fault-Induced Output Signature}

Having characterized the fault-induced current $i_{cc,dq}$, we next determine
how this current appears in the measured output. Let
$i_{s,dq}=[i_{sd},\,i_{sq}]^{\top}\in\mathbb{R}^{2}$ denote the nominal
stator-current vector generated by the common-mode model of the previous
section. The corresponding stator current in the presence of the winding
fault is
\begin{equation}
    \begin{aligned}
        i_{s,dq}^{f}
        &=
        i_{s,dq}
        +
        i_{cc,dq}
        \\
        &=
        i_{s,dq}
        +
        \eta_{cc}
        \mathcal{G}(\gamma,\gamma_{cc},V_{dq}).
    \end{aligned}
    \label{eq:faulty_current}
\end{equation}
The healthy case is recovered immediately from $\eta_{cc}=0$, for which
$i_{cc,dq}=0$ and $i_{s,dq}^{f}=i_{s,dq}$. For $\eta_{cc}>0$, the deviation
from the nominal current is determined by the spatial orientation of the
short-circuit winding and the instantaneous electrical excitation.

Combining the nominal common-mode dynamics with the differential-mode fault
representation yields
\begin{equation}
    \begin{aligned}
        \dot{x}(t)
        &=
        f(x(t),u(t),T_l(t)),
        \\
        y^{f}(t)
        &=
        Cx(t)
        +
        D\,\eta_{cc}(t)
        \mathcal{G}\!\left(
            \gamma(t),\gamma_{cc},V_{dq}(t)
        \right).
    \end{aligned}
    \label{eq:fault_system}
\end{equation}
where
$D=[\,1~0;\,0~1;\,0~0\,]\in\mathbb{R}^{3\times2}$
embeds the fault-induced current into the measured output
$y^{f}=[i_{sd}^{f},\,i_{sq}^{f},\,\omega_r]^{\top}\in\mathbb{R}^{3}$.
Thus, the nominal electromechanical dynamics remain in the state equation,
whereas the differential-mode fault appears explicitly in the measured
output.

For subsequent estimation, it is useful to isolate the dependence of the
measured output on the unknown fault severity $\eta_{cc}(t)$. To this end,
define the fault-signature vector
\begin{equation}
    \psi(t;\gamma_{cc})
    :=
    D\,
    \mathcal{G}\!\left(
        \gamma(t),\gamma_{cc},V_{dq}(t)
    \right)
    \in\mathbb{R}^{3}.
    \label{eq:fault_signature_vector}
\end{equation}
The faulty output equation can then be written in the affine form
\begin{equation}
    y^{f}(t)
    =
    Cx(t)
    +
    \eta_{cc}(t)\psi(t;\gamma_{cc}).
    \label{eq:fault_output_affine}
\end{equation}
For a prescribed spatial orientation $\gamma_{cc}$ and known operating
condition, $\psi(t;\gamma_{cc})$ is known, whereas $\eta_{cc}(t)$ is the
unknown scalar degradation variable. Hence, although the complete motor
model is nonlinear, the differential-mode contribution is affine in the
fault severity. In particular, its instantaneous sensitivity satisfies
$\partial y^{f}(t)/\partial\eta_{cc}=\psi(t;\gamma_{cc})$.

From \eqref{eq:fault_output_affine}, the measured output $y^{f}(t)$ consists
of the nominal motor output $Cx(t)$ and the additional fault contribution
$\eta_{cc}(t)\psi(t;\gamma_{cc})$. To isolate the latter, define the fault
residual as $r_f(t):=y^{f}(t)-Cx(t)$. It follows that
\begin{equation}
    r_f(t)
    =
    \eta_{cc}(t)\psi(t;\gamma_{cc}).
    \label{eq:fault_residual_structure}
\end{equation}
Thus, $\eta_{cc}(t)=0$ implies $r_f(t)=0$. Conversely, whenever
$\psi(t;\gamma_{cc})\neq0$, $r_f(t)=0$ if and only if
$\eta_{cc}(t)=0$. Moreover, since $\eta_{cc}(t)\geq0$,
\begin{equation}
    \|r_f(t)\|
    =
    \eta_{cc}(t)
    \|\psi(t;\gamma_{cc})\|.
    \label{eq:residual_magnitude}
\end{equation}
Hence, for a fixed instantaneous fault signature, the residual magnitude
scales directly with the fault severity.

Equation~\eqref{eq:fault_residual_structure} therefore provides a direct
relation between the measurable effect of the winding fault and its unknown
severity $\eta_{cc}(t)$. The fault signature $\psi(t;\gamma_{cc})$ determines
how the spatial orientation of the short-circuit winding and the operating
condition shape this effect, whereas $\eta_{cc}(t)$ determines its magnitude.
Since $\eta_{cc}(t)$ is generally unknown and may evolve as the winding
deteriorates, it is estimated jointly with the motor states in the following
section and subsequently used as the health variable for prognosis and
reliability assessment.

\section{Augmented-State Particle Filtering}
\label{sec:particle_filtering}

The fault model in \eqref{eq:fault_output_affine} expresses the measured
output $y^{f}(t)$ in terms of the nominal motor state $x(t)$, the fault
severity $\eta_{cc}(t)$, and the fault signature
$\psi(t;\gamma_{cc})$. Since $\eta_{cc}(t)$ is not directly measurable,
the estimation problem is to reconstruct the electromechanical state
$x(t)\in\mathbb{R}^{5}$ while simultaneously estimating the fraction of
short-circuited turns $\eta_{cc}(t)\in[0,1]$ from the available noisy
measurements.

For the estimation stage, the spatial orientation $\gamma_{cc}$ of the
short-circuit winding is assumed to be fixed and known. Consequently,
$\psi(t;\gamma_{cc})$ is determined by the prescribed spatial orientation,
reference-frame angle, and known stator-voltage excitation,
while $\eta_{cc}(t)$ remains the
unknown time-varying degradation variable. This allows the estimator to
focus on the evolution of the fault severity. If $\gamma_{cc}$ is unknown, the estimation may instead be evaluated over
the distinct orientation set
$\Gamma_{cc}:=\{k2\pi/n_e:\,k=0,\ldots,n_e/(2p)-1\}$,
whenever $n_e/(2p)\in\mathbb{N}$, by
Proposition~\ref{prop:fault_spatial_periodicity}.

\subsection{Augmented Stochastic Model}

Let $T_s>0$ denote the sampling period. To obtain a discrete-time model for
recursive estimation, let $F_x(x_k,u_k,T_{l,k})$ denote the one-step
propagation of the nonlinear motor dynamics in
\eqref{eq:im_compact_dynamics} over the sampling interval $T_s$. The
discrete-time state model is
\begin{equation}
    x_{k+1}
    =
    F_x(x_k,u_k,T_{l,k})
    +
    w_{x,k},
    \label{eq:discrete_motor_model}
\end{equation}
where $x_k:=x(kT_s)$, $u_k:=u(kT_s)$, and
$T_{l,k}:=T_l(kT_s)$. The process disturbance is modeled as
$w_{x,k}\sim\mathcal{N}(0,Q_x)$, where
$Q_x\in\mathbb{R}^{5\times5}$ and $Q_x\succeq0$.

During online estimation, no long-term degradation law is imposed on
$\eta_{cc,k}$. Instead, its evolution between consecutive sampling instants is modeled
locally as a random walk,
\begin{equation*}
    \eta_{cc,k+1}
    =
    \eta_{cc,k}
    +
    w_{\eta,k},
\end{equation*}
where $w_{\eta,k}\sim\mathcal{N}(0,Q_{\eta})$ is the stochastic increment
and $Q_{\eta}>0$ controls the adaptability of the estimator to changes in
the winding degradation.
Since $\eta_{cc,k}$ represents the fraction of short-circuited turns, it must
satisfy $\eta_{cc,k}\in\mathcal{E}:=[0,1]$.
The unconstrained random walk, however, does not preserve this interval. 
We therefore define the projection
$\Pi_{\mathcal{E}}(z):=\min\{1,\max\{0,z\}\}$ and constrain the severity
transition as
\begin{equation}
    \eta_{cc,k+1}
    =
    \Pi_{\mathcal{E}}
    \left(
        \eta_{cc,k}+w_{\eta,k}
    \right).
    \label{eq:constrained_severity_transition}
\end{equation}
Thus, the random walk provides local adaptability without allowing the
estimated fraction of short-circuited turns to leave its physically
admissible interval. Importantly, \eqref{eq:constrained_severity_transition}
is an estimation model rather than a prescribed long-term degradation law;
future fault evolution is modeled separately in the prognosis stage.

To estimate the motor state and fault severity jointly, define the augmented
state
\begin{equation}
    z_k
    :=
    \begin{bmatrix}
        x_k\\
        \eta_{cc,k}
    \end{bmatrix}
    \in
    \mathcal{Z}
    :=
    \mathbb{R}^{5}\times[0,1].
    \label{eq:augmented_state}
\end{equation}
Combining \eqref{eq:discrete_motor_model},
\eqref{eq:constrained_severity_transition}, and the fault-dependent output
model \eqref{eq:fault_output_affine} gives the augmented stochastic system
\begin{equation}
    \begin{aligned}
        z_{k+1}
        &=
        \begin{bmatrix}
            F_x(x_k,u_k,T_{l,k})+w_{x,k}\\
            \Pi_{\mathcal{E}}(\eta_{cc,k}+w_{\eta,k})
        \end{bmatrix},
        \\
        y_k^{f}
        &=
        Cx_k+\eta_{cc,k}\psi_k(\gamma_{cc})+v_k.
    \end{aligned}
    \label{eq:augmented_stochastic_system}
\end{equation}
where
$w_k=[w_{x,k}^{\top},\,w_{\eta,k}]^{\top}
\sim\mathcal{N}(0,Q)$ with
$Q=\operatorname{diag}(Q_x,Q_{\eta})\in\mathbb{R}^{6\times6}$, while
$\psi_k(\gamma_{cc})
:=D\mathcal{G}(\gamma_k,\gamma_{cc},V_{dq,k})\in\mathbb{R}^{3}$.
The measurement noise satisfies
$v_k\sim\mathcal{N}(0,R)$, where
$R\in\mathbb{R}^{3\times3}$ and $R\succ0$.

Accordingly, the augmented state $z_k$ contains the five electromechanical
states governing the nominal motor dynamics together with the scalar
degradation variable $\eta_{cc,k}$ governing the fault contribution to the
measured output.

\subsection{Bayesian State--Fault Estimation}

Let $\mathcal{Y}_k:=\{y_1^{f},\ldots,y_k^{f}\}$ denote the measurement
history available at sampling instant $k$. Given $\mathcal{Y}_k$, the
estimation objective is to infer jointly the electromechanical state $x_k$
and the fault severity $\eta_{cc,k}$ through the posterior distribution
\begin{equation}
    p(z_k\mid\mathcal{Y}_k)
    =
    p(x_k,\eta_{cc,k}\mid\mathcal{Y}_k).
    \label{eq:joint_posterior}
\end{equation}
Thus, the motor state and the fraction of short-circuited turns are estimated
within a single Bayesian inference problem.

The stochastic model \eqref{eq:augmented_stochastic_system} defines the
state-transition density and measurement likelihood required for recursive
Bayesian estimation. Since the motor input $u_k$, load torque $T_{l,k}$,
and the operating quantities $\gamma_k$ and $V_{dq,k}$ entering the fault
signature $\psi_k(\gamma_{cc})$ are known at each sampling instant, their
conditioning is suppressed below for notational simplicity.
Given the posterior $p(z_{k-1}\mid\mathcal{Y}_{k-1})$, the Chapman--Kolmogorov
prediction gives
\begin{equation}
    p(z_k\mid\mathcal{Y}_{k-1})
    =
    \int_{\mathcal{Z}}
    p(z_k\mid z_{k-1})
    p(z_{k-1}\mid\mathcal{Y}_{k-1})
    \,\mbox{d}z_{k-1}.
    \label{eq:bayes_prediction}
\end{equation}
This step propagates the uncertainty in both $x_{k-1}$ and
$\eta_{cc,k-1}$ through the augmented stochastic dynamics.

Upon receiving the measurement $y_k^{f}$, the predicted density is corrected
using the Gaussian likelihood
$p(y_k^{f}\mid z_k)\propto
\exp\{-\frac{1}{2}
[y_k^{f}-Cx_k-\eta_{cc,k}\psi_k(\gamma_{cc})]^{\top}
R^{-1}
[y_k^{f}-Cx_k-\eta_{cc,k}\psi_k(\gamma_{cc})]\}$.
Bayes' rule then gives
\begin{equation}
    p(z_k\mid\mathcal{Y}_k)
    =
    \frac{
        p(y_k^{f}\mid z_k)
        p(z_k\mid\mathcal{Y}_{k-1})
    }{
        p(y_k^{f}\mid\mathcal{Y}_{k-1})
    }.
    \label{eq:bayes_update}
\end{equation}
The likelihood therefore assigns greater posterior probability to augmented
states whose predicted output is more consistent with the measured output
$y_k^{f}$.

Because the nonlinear motor dynamics and the fault-dependent measurement map
in \eqref{eq:augmented_stochastic_system} prevent, in general, closed-form
evaluation of the prediction and update equations, the posterior
$p(z_k\mid\mathcal{Y}_k)$ is approximated numerically using a particle
filter.

\subsection{Sequential Importance Sampling}

Since the posterior in \eqref{eq:joint_posterior} is generally unavailable
in closed form, it is approximated by $N_p$ weighted particles as
\begin{equation}
    p(z_k\mid\mathcal{Y}_k)
    \approx
    \sum_{i=1}^{N_p}
    w_k^{(i)}
    \delta
    \left(
        z_k-z_k^{(i)}
    \right),
    \label{eq:particle_density}
\end{equation}
where
$z_k^{(i)}=[(x_k^{(i)})^{\top},\,\eta_{cc,k}^{(i)}]^{\top}\in\mathcal{Z}$
is the $i$th augmented particle and $w_k^{(i)}\geq0$ is its normalized
importance weight, with $\sum_{i=1}^{N_p}w_k^{(i)}=1$.

Using the state-transition density as the importance proposal, each particle
is propagated according to
\begin{equation}
    \begin{aligned}
        x_k^{(i)}
        &=
        F_x
        \left(
            x_{k-1}^{(i)},
            u_{k-1},
            T_{l,k-1}
        \right)
        +
        w_{x,k-1}^{(i)},
        \\
        \eta_{cc,k}^{(i)}
        &=
        \Pi_{\mathcal{E}}
        \left(
            \eta_{cc,k-1}^{(i)}
            +
            w_{\eta,k-1}^{(i)}
        \right).
    \end{aligned}
    \label{eq:particle_prediction}
\end{equation}
Thus, each particle represents one possible realization of both the
electromechanical state and the physically admissible fault severity at
sampling instant $k$.

The predicted particles are then evaluated against the measured output
$y_k^{f}$. For the $i$th particle, define the innovation
$\nu_k^{(i)}
:=y_k^{f}-Cx_k^{(i)}
-\eta_{cc,k}^{(i)}\psi_k(\gamma_{cc})
\in\mathbb{R}^{3}$.
Since the measurement noise in \eqref{eq:augmented_stochastic_system}
satisfies $v_k\sim\mathcal{N}(0,R)$, the corresponding likelihood is
\begin{equation}
    p(y_k^{f}\mid z_k^{(i)})
    =
    \frac{
        \exp\!\left[
        -\frac{1}{2}
        (\nu_k^{(i)})^{\top}
        R^{-1}
        \nu_k^{(i)}
        \right]
    }{
        (2\pi)^{3/2}\det(R)^{1/2}
    }.
    \label{eq:particle_likelihood}
\end{equation}
Hence, particles producing smaller innovations relative to the measurement
covariance $R$ receive larger likelihood values.

Using the transition density as the proposal causes the proposal-dependent
terms in the general importance-weight recursion to cancel. The weights
therefore reduce to
\begin{equation}
    \begin{aligned}
        \widetilde w_k^{(i)}
        &=
        w_{k-1}^{(i)}
        p(y_k^{f}\mid z_k^{(i)}),
        \\
        w_k^{(i)}
        &=
        \frac{\widetilde w_k^{(i)}}
        {\displaystyle\sum_{j=1}^{N_p}\widetilde w_k^{(j)}}.
    \end{aligned}
    \label{eq:particle_weight_update}
\end{equation}
Consequently, the posterior particle population concentrates on augmented
states whose predicted stator currents and rotor speed are most consistent
with the measured motor response.

\subsection{Degeneracy and Resampling}

Repeated importance-weight updates may lead to particle degeneracy, whereby
the posterior probability becomes concentrated on only a small fraction of
the particle population. To monitor this loss of particle diversity, define
the effective sample size
\begin{equation}
    N_{\mathrm{eff},k}
    :=
    \left[
        \sum_{i=1}^{N_p}
        \left(w_k^{(i)}\right)^2
    \right]^{-1}.
    \label{eq:effective_sample_size}
\end{equation}
Since the weights are nonnegative and normalized,
$1\leq N_{\mathrm{eff},k}\leq N_p$. Equivalently, the normalized effective
sample size $\rho_{\mathrm{eff},k}:=N_{\mathrm{eff},k}/N_p$ satisfies
$1/N_p\leq\rho_{\mathrm{eff},k}\leq1$.
The upper bound is attained for uniform
weights $w_k^{(i)}=1/N_p$, whereas values approaching one indicate severe
concentration of the posterior mass on a small number of particles.

To prevent persistent weight degeneracy, resampling is triggered whenever
$N_{\mathrm{eff},k}<N_{\mathrm{th}}$, equivalently
$\rho_{\mathrm{eff},k}<\rho_{\mathrm{res}}$, where
$N_{\mathrm{th}}:=\rho_{\mathrm{res}}N_p$ and
$\rho_{\mathrm{res}}\in(0,1)$ is a prescribed threshold.
The particles are then resampled according to their normalized weights $w_k^{(i)}$, so that
particles with large posterior probability are preferentially replicated
while particles with negligible probability are discarded. Following
resampling, the particle weights are reset as
\begin{equation}
    w_k^{(i)}
    =
    \frac{1}{N_p},
    \qquad
    i=1,\ldots,N_p.
    \label{eq:resampled_weights}
\end{equation}
Resampling therefore redistributes the finite particle population toward
regions of the augmented state space $\mathcal{Z}$ having significant
posterior probability, thereby maintaining an effective representation of
both the motor state $x_k$ and the fault severity $\eta_{cc,k}$.

\subsection{Posterior State and Fault Estimates}

Prior to resampling, the minimum-mean-square-error estimates are given by
the conditional posterior means, which under the particle approximation
\eqref{eq:particle_density} are evaluated as
\begin{equation}
    \begin{aligned}
        \widehat{x}_k
        &=
        \sum_{i=1}^{N_p}
        w_k^{(i)}x_k^{(i)}
        \in\mathbb{R}^{5},
        \\
        \widehat{\eta}_{cc,k}
        &=
        \sum_{i=1}^{N_p}
        w_k^{(i)}\eta_{cc,k}^{(i)}
        \in[0,1].
    \end{aligned}
    \label{eq:pf_posterior_estimates}
\end{equation}
In particular, $\widehat{\eta}_{cc,k}$ provides the online estimate of the
fraction of short-circuited stator turns. Since the complete posterior
particle population is retained, the uncertainty associated with this
estimate can also be quantified. Its posterior variance is approximated by
\begin{equation}
    \widehat{\sigma}_{\eta,k}^{2}
    =
    \sum_{i=1}^{N_p}
    w_k^{(i)}
    \left(
        \eta_{cc,k}^{(i)}
        -
        \widehat{\eta}_{cc,k}
    \right)^2.
    \label{eq:pf_fault_variance}
\end{equation}
Accordingly, the posterior uncertainty band is
$[\eta_k^{-},\eta_k^{+}]$, where
$\eta_k^{-}:=\max\{0,\widehat{\eta}_{cc,k}-2\widehat{\sigma}_{\eta,k}\}$
and
$\eta_k^{+}:=\min\{1,\widehat{\eta}_{cc,k}+2\widehat{\sigma}_{\eta,k}\}$.
Thus, the particle filter provides both the estimated winding degradation
$\widehat{\eta}_{cc,k}$ and its associated posterior uncertainty
$\widehat{\sigma}_{\eta,k}$.

The sequence $\{\widehat{\eta}_{cc,k}\}$ constitutes the estimated
degradation history available for prognosis, while the posterior particles
retain the uncertainty in the current winding condition. These quantities
provide the common information source for the prognostic models developed
in the following section, where the estimated fault severity is propagated
forward to predict its future evolution and subsequent reliability.

\section{Fault Prognosis}
\label{sec:fault_prognosis}

The particle filter developed in the previous section provides the current
fault-severity estimate $\widehat{\eta}_{cc,k}$ together with its posterior
uncertainty. These quantities characterize the winding condition at sampling
instant $k$; reliability assessment, however, requires information about its
future evolution. The prognosis objective is therefore to predict the
severity $\eta_{cc,k+r}$ over a prescribed horizon
$r\in\{1,\ldots,H\}$, where $H\in\mathbb{N}_{>0}$.

For history-based prognosis, define the estimated degradation history
$\mathcal{H}_{k}^{\eta}
:=\{\widehat{\eta}_{cc,1},\ldots,\widehat{\eta}_{cc,k}\}$.
A deterministic prognostic model maps this information into the point
forecast
\begin{equation}
    \widehat{\eta}_{cc,k+r\mid k},
    \qquad
    r\in\{1,\ldots,H\},
    \label{eq:deterministic_severity_forecast}
\end{equation}
where $\widehat{\eta}_{cc,k+r\mid k}$ denotes the predicted fraction of
short-circuited turns at sampling instant $k+r$ using information available
through time $k$.

Probabilistic prognosis instead characterizes the uncertainty in future
degradation through
\begin{equation}
    p\!\left(
        \eta_{cc,k+r}\mid\mathcal{I}_k
    \right),
    \qquad
    r\in\{1,\ldots,H\},
    \label{eq:predictive_density_general}
\end{equation}
where $\mathcal{I}_k$ denotes the information available to the prognostic
model. For history-based models, $\mathcal{I}_k=\mathcal{H}_{k}^{\eta}$,
whereas the particle-based forecast uses the complete PF posterior
$p(z_k\mid\mathcal{Y}_k)$ in \eqref{eq:joint_posterior}. Thus, the latter
retains information that would be lost by reducing the current winding
condition to the point estimate $\widehat{\eta}_{cc,k}$ alone.

The four prognostic methods considered below differ in how they model the
future evolution of the same physical degradation variable $\eta_{cc}$.
The linear-trend and Holt models provide deterministic forecasts from the
estimated degradation history, whereas the Bayesian and particle-based
models additionally characterize prediction uncertainty. This common
formulation permits the effect of the prognostic model to be assessed
independently of the underlying fault-estimation procedure.

\subsection{Linear Trend Model}

The linear-trend model predicts future winding degradation by approximating
the recent evolution of the estimated fault severity with a straight line.
Let $M\leq k$ denote the number of most recent PF estimates used for this
local approximation and define
$\mathcal{J}_k:=\{k-M+1,\ldots,k\}$. Over this window, model the estimated
severity as
\begin{equation}
    \widehat{\eta}_{cc,j}
    =
    a_k+b_k j+\varepsilon_j,
    \qquad
    j\in\mathcal{J}_k,
    \label{eq:linear_fault_model}
\end{equation}
where $a_k\in\mathbb{R}$ is the local intercept,
$b_k\in\mathbb{R}$ is the local degradation rate, and $\varepsilon_j$
denotes the fitting residual.

The parameters are identified from the recent degradation history by
least squares:
\begin{equation}
    (\hat a_k,\hat b_k)
    =
    \underset{a,b\in\mathbb{R}}{\arg\min}
    \sum_{j=k-M+1}^{k}
    \left(
        \widehat{\eta}_{cc,j}-a-bj
    \right)^2.
    \label{eq:linear_ls}
\end{equation}
To express the solution compactly, define
\[
    \Phi_k
    :=
    \begin{bmatrix}
        1 & k-M+1\\
        \vdots & \vdots\\
        1 & k
    \end{bmatrix}
    \in\mathbb{R}^{M\times2},
    \qquad
    \mathbf{y}_k^{\eta}
    :=
    \begin{bmatrix}
        \widehat{\eta}_{cc,k-M+1}\\
        \vdots\\
        \widehat{\eta}_{cc,k}
    \end{bmatrix}
    \in\mathbb{R}^{M}.
\]
For $M\geq2$, the columns of $\Phi_k$ are linearly independent, and hence
the least-squares solution is uniquely given by
\begin{equation}
    \begin{bmatrix}
        \hat a_k\\
        \hat b_k
    \end{bmatrix}
    =
    \left(
        \Phi_k^{\top}\Phi_k
    \right)^{-1}
    \Phi_k^{\top}\mathbf{y}_k^{\eta}.
    \label{eq:linear_ls_solution}
\end{equation}

Extrapolating the fitted local trend beyond the current sampling instant
gives the unconstrained forecast
$\widetilde{\eta}_{cc,k+r\mid k}^{\mathrm{LT}}
:=\hat a_k+\hat b_k(k+r)$ for $r\in\{1,\ldots,H\}$. Since linear
extrapolation does not, in general, preserve the physical severity interval
$\mathcal{E}=[0,1]$, the final forecast is projected onto $\mathcal{E}$:
\begin{equation}
    \widehat{\eta}_{cc,k+r\mid k}^{\mathrm{LT}}
    =
    \Pi_{\mathcal{E}}
    \left(
        \widetilde{\eta}_{cc,k+r\mid k}^{\mathrm{LT}}
    \right),
    \qquad
    r\in\{1,\ldots,H\}.
    \label{eq:linear_prediction}
\end{equation}

The window length $M$ determines the balance between smoothing and local
adaptation: larger $M$ attenuates fluctuations in the PF estimates, whereas
smaller $M$ places greater emphasis on recent changes in the degradation
rate.

\subsection{Holt Exponential Smoothing}

The second deterministic prognostic model uses Holt exponential smoothing to
represent the estimated fault severity through a recursively updated level
and trend. Unlike the linear-trend model, which fits the most recent $M$ PF
estimates simultaneously, Holt smoothing updates these quantities whenever a
new estimate $\widehat{\eta}_{cc,k}$ becomes available.

Let $\ell_k$ and $d_k$ denote the estimated severity level and local
degradation trend at sampling instant $k$, respectively, initialized from
the available degradation history with $d_{k_0}\geq0$. Their recursive
updates are
\begin{equation}
    \begin{aligned}
        \ell_k
        &=
        \alpha\widehat{\eta}_{cc,k}
        +
        (1-\alpha)
        \left(
            \ell_{k-1}+d_{k-1}
        \right),
        \\
        d_k
        &=
        \max\left\{
            0,\,
            \beta
            \left(
                \ell_k-\ell_{k-1}
            \right)
            +
            (1-\beta)d_{k-1}
        \right\}.
    \end{aligned}
    \label{eq:holt_recursion}
\end{equation}
where $\alpha,\beta\in(0,1)$ are the level- and trend-smoothing parameters,
respectively, and $d_k\geq0$ enforces a nonnegative local degradation trend.
The parameter $\alpha$ controls how strongly the current PF estimate affects
the updated level, whereas $\beta$ determines how rapidly the estimated
degradation trend responds to changes in that level.

Propagating the current level with the estimated trend gives the
unconstrained $r$-step-ahead forecast
$\widetilde{\eta}_{cc,k+r\mid k}^{\mathrm{H}}
:=\ell_k+r d_k$ for $r\in\{1,\ldots,H\}$. As with the linear-trend model,
this extrapolation does not necessarily preserve the physical severity
interval $\mathcal{E}=[0,1]$. The Holt forecast is therefore defined as
\begin{equation}
    \widehat{\eta}_{cc,k+r\mid k}^{\mathrm{H}}
    =
    \Pi_{\mathcal{E}}
    \left(
        \widetilde{\eta}_{cc,k+r\mid k}^{\mathrm{H}}
    \right),
    \qquad
    r\in\{1,\ldots,H\}.
    \label{eq:holt_prediction}
\end{equation}

Holt smoothing therefore provides a recursive deterministic predictor that
retains only the current level $\ell_k$ and trend $d_k$. The parameters
$\alpha$ and $\beta$ determine the balance between smoothing the PF-based
degradation history and adapting to recent changes, while
\eqref{eq:holt_prediction} ensures that the resulting prediction remains
physically admissible.

\subsection{Bayesian Degradation Model}

The deterministic models above provide point forecasts but do not quantify
the uncertainty in future fault severity. To obtain a probabilistic forecast,
introduce the latent degradation process
\begin{equation}
    \widetilde{\eta}_{cc,k+1}
    =
    \widetilde{\eta}_{cc,k}
    +
    \mu
    +
    \xi_k,
    \label{eq:bayesian_degradation}
\end{equation}
where $\mu\in\mathbb{R}$ is the unknown mean degradation increment per
sampling interval and
$\xi_k\sim\mathcal{N}(0,\sigma_d^2)$, with $\sigma_d^2>0$, represents
stochastic degradation variability. The latent variable
$\widetilde{\eta}_{cc,k}\in\mathbb{R}$ is introduced to preserve Gaussian
tractability; physical admissibility is imposed subsequently on the
predictive distribution.

Rather than prescribing $\mu$, its value is inferred locally from the
PF-based degradation history $\mathcal{H}_k^\eta$. Let $M_B\leq k-1$
denote the number of recent increments used for inference and define
$\mathcal{J}_k^{\mathrm{B}}:=\{k-M_B+1,\ldots,k\}$.
Assign the Gaussian prior
$\mu\sim\mathcal{N}(\mu_0,\sigma_{\mu,0}^2)$, where
$\sigma_{\mu,0}^2>0$, and define
$\Delta\widehat{\eta}_{cc,j}
:=\widehat{\eta}_{cc,j}-\widehat{\eta}_{cc,j-1}$ for
$j\in\mathcal{J}_k^{\mathrm{B}}$.
Under \eqref{eq:bayesian_degradation},
$\Delta\widehat{\eta}_{cc,j}\mid\mu
\sim\mathcal{N}(\mu,\sigma_d^2)$.

For known $\sigma_d^2$, Gaussian conjugacy gives
\begin{equation}
    \mu\mid\mathcal{H}_k^\eta
    \sim
    \mathcal{N}
    \left(
        \mu_k,\sigma_{\mu,k}^2
    \right),
    \qquad
    \begin{cases}
        \displaystyle
        \sigma_{\mu,k}^2
        =
        \left(
            \frac{1}{\sigma_{\mu,0}^2}
            +
            \frac{M_B}{\sigma_d^2}
        \right)^{-1},
        \\[3ex]
        \displaystyle
        \mu_k
        =
        \sigma_{\mu,k}^2
        \left[
            \frac{\mu_0}{\sigma_{\mu,0}^2}
            +
            \frac{1}{\sigma_d^2}
            \sum\nolimits_{j\in\mathcal{J}_k^{\mathrm{B}}}
            \Delta\widehat{\eta}_{cc,j}
        \right].
    \end{cases}
    \label{eq:degradation_rate_posterior}
\end{equation}
Thus, $\mu_k$ estimates the mean degradation increment, while
$\sigma_{\mu,k}^2$ quantifies the remaining uncertainty in that rate.

Starting from the current PF estimate $\widehat{\eta}_{cc,k}$ and iterating
\eqref{eq:bayesian_degradation} over $r$ future sampling intervals gives
$\widetilde{\eta}_{cc,k+r}
=
\widehat{\eta}_{cc,k}
+r\mu+\sum_{j=0}^{r-1}\xi_{k+j}$.
Marginalizing over the posterior uncertainty in $\mu$ therefore yields
\begin{equation}
    \widetilde{\eta}_{cc,k+r}
    \mid
    \mathcal{H}_k^\eta
    \sim
    \mathcal{N}
    \left(
        m_{k,r},
        P_{k,r}
    \right),
    \qquad
    \begin{cases}
        \displaystyle
        m_{k,r}
        =
        \widehat{\eta}_{cc,k}
        +
        r\mu_k,
        \\[2ex]
        \displaystyle
        P_{k,r}
        =
        r\sigma_d^2
        +
        r^2\sigma_{\mu,k}^2.
    \end{cases}
    \label{eq:bayesian_predictive_distribution}
\end{equation}
The two terms in $P_{k,r}$ have distinct origins:
$r\sigma_d^2$ is the accumulated stochastic degradation uncertainty, whereas
$r^2\sigma_{\mu,k}^2$ results from uncertainty in the inferred degradation
rate. The latter therefore becomes increasingly influential as the prediction
horizon increases.

The Gaussian model above is defined on $\mathbb{R}$, whereas the physical
fault severity must lie in $\mathcal{E}=[0,1]$. For deterministic comparison,
we use the projected point forecast
\begin{equation}
    \widehat{\eta}_{cc,k+r\mid k}^{\mathrm{B}}
    =
    \Pi_{\mathcal{E}}(m_{k,r}),
    \qquad
    r\in\{1,\ldots,H\}.
    \label{eq:bayesian_point_prediction}
\end{equation}
For probabilistic prognosis, physical admissibility is instead enforced on
the complete predictive distribution. Let $\phi(\cdot)$ and $\Phi(\cdot)$
denote the standard Gaussian density and distribution functions,
respectively. Conditioning the Gaussian predictive law in
\eqref{eq:bayesian_predictive_distribution} on
$\widetilde{\eta}_{cc,k+r}\in[0,1]$ gives
\begin{equation}
    p_{\mathcal{E}}
    \left(
        \eta\mid\mathcal{H}_{k}^{\eta}
    \right)
    =
    \frac{
        \displaystyle
        \frac{1}{\sqrt{P_{k,r}}}
        \phi
        \left(
            \frac{\eta-m_{k,r}}{\sqrt{P_{k,r}}}
        \right)
    }{
        \displaystyle
        \Phi
        \left(
            \frac{1-m_{k,r}}{\sqrt{P_{k,r}}}
        \right)
        -
        \Phi
        \left(
            \frac{-m_{k,r}}{\sqrt{P_{k,r}}}
        \right)
    },
    \qquad
    \eta\in[0,1].
    \label{eq:truncated_bayesian_predictive}
\end{equation}
Hence, the Bayesian model provides both a physically admissible point
forecast and a horizon-dependent predictive distribution. The latter retains
uncertainty in both the degradation process and its inferred rate and will
be used directly in the probabilistic reliability assessment.

\subsection{Particle-Based Forecast}

The Bayesian model characterizes future degradation through a parametric
Gaussian distribution initialized from the point estimate
$\widehat{\eta}_{cc,k}$. The particle-based forecast instead propagates an
empirical representation of the current severity uncertainty.

Let $\{\eta_{cc,k}^{(i)},w_k^{(i)}\}_{i=1}^{N_p}$ denote the severity
particles and corresponding normalized weights at the prognosis instant.
When the augmented-state PF posterior is available, these are the severity
components and weights in \eqref{eq:particle_density}. The locally inferred
degradation increment $\mu_k$ in \eqref{eq:degradation_rate_posterior} is
then used for forward propagation.
Each severity particle is
then propagated over the prediction horizon according to
\begin{equation}
    \eta_{cc,k+r}^{(i)}
    =
    \Pi_{\mathcal{E}}
    \left(
        \eta_{cc,k+r-1}^{(i)}
        +
        \mu_k
        +
        \xi_{k+r-1}^{(i)}
    \right),
    \qquad
    r\in\{1,\ldots,H\},
    \label{eq:particle_forecast_propagation}
\end{equation}
where
$\xi_{k+r-1}^{(i)}\sim\mathcal{N}(0,\sigma_d^2)$ independently across
particles and prediction steps. The projection onto
$\mathcal{E}=[0,1]$ ensures that every predicted trajectory remains
physically admissible.

For each prediction step $r$, the propagated particles define the empirical
predictive distribution
\begin{equation}
    p
    \left(
        \eta_{cc,k+r}\mid\mathcal{Y}_k
    \right)
    \approx
    \sum_{i=1}^{N_p}
        w_k^{(i)}
        \delta
        \left(
            \eta_{cc,k+r}
            -
            \eta_{cc,k+r}^{(i)}
        \right).
    \label{eq:particle_predictive_distribution}
\end{equation}
Hence, the forecast propagates the current fault-severity uncertainty and
the stochastic variability of future degradation.

The corresponding point forecast and predictive variance are
\begin{equation}
    \widehat{\eta}_{cc,k+r\mid k}^{\mathrm{PF}}
    =
    \sum_{i=1}^{N_p}
        w_k^{(i)}
        \eta_{cc,k+r}^{(i)},
    \qquad
    P_{\eta,k+r}^{\mathrm{PF}}
    =
    \sum_{i=1}^{N_p}
        w_k^{(i)}
        \left(
            \eta_{cc,k+r}^{(i)}
            -
            \widehat{\eta}_{cc,k+r\mid k}^{\mathrm{PF}}
        \right)^2.
    \label{eq:particle_prediction_statistics}
\end{equation}
Unlike the Gaussian Bayesian model, the particle forecast imposes no
parametric form on the predictive distribution and can therefore represent
non-Gaussian uncertainty generated by the posterior distribution, stochastic
propagation, and projection onto the physical severity interval.

The empirical distribution also provides fault-risk quantities directly.
For a prescribed severity threshold $\eta_{\mathrm{crit}}\in(0,1)$, the
pointwise exceedance probability is approximated by
\begin{equation}
    \mathbb{P}
    \left(
        \eta_{cc,k+r}\geq\eta_{\mathrm{th}}
        \mid\mathcal{Y}_k
    \right)
    \approx
    \sum_{i=1}^{N_p}
        w_k^{(i)}
        \mathbf{1}_{[\eta_{\mathrm{th}},1]}
        \left(
            \eta_{cc,k+r}^{(i)}
        \right).
    \label{eq:particle_exceedance_probability}
\end{equation}
Prediction intervals are obtained analogously from weighted quantiles of
the empirical distribution. These quantities will be used directly in the
probabilistic reliability assessment of the following section.

The four prognostic models therefore provide complementary descriptions of
the same physical degradation variable $\eta_{cc}$. The linear-trend and
Holt models provide deterministic forecasts, whereas the Bayesian and
particle-based models additionally characterize prediction uncertainty.
All four methods use the same PF-based health information and prediction
horizon, enabling their forecasting performance and resulting reliability
estimates to be compared on a common basis.

\section{Reliability Assessment}
\label{sec:reliability}

The prognosis methods of Section~\ref{sec:fault_prognosis} characterize the
future stator-fault severity through point forecasts
$\widehat{\eta}_{cc,k+r\mid k}^{q}$, where
$q\in\{\mathrm{LT},\mathrm{H},\mathrm{B},\mathrm{PF}\}$ and
$r\in\{1,\ldots,H\}$, or through the predictive distribution
$p(\eta_{cc,k+r}\mid\mathcal{I}_k)$ when probabilistic information is
retained. Reliability assessment uses these predictions to quantify whether
the winding degradation remains below a prescribed failure condition.

Since $\eta_{cc}\in[0,1]$ represents the fraction of short-circuited stator
turns, let $\eta_{\mathrm{crit}}\in(0,1)$ denote the critical severity at
which failure is declared. For a degradation trajectory evolving from
sampling instant $k$, define the first-passage time
\begin{equation}
    \tau_k
    :=
    \inf
    \left\{
        r\in\mathbb{N}_{>0}:
        \eta_{cc,k+r}
        \geq
        \eta_{\mathrm{crit}}
    \right\},
    \label{eq:first_passage_time}
\end{equation}
with $\inf\varnothing=\infty$. Thus, $\tau_k$ is the number of future
sampling intervals until the degradation first reaches the prescribed
failure region, while the corresponding physical first-passage time is
$T_s\tau_k$.

The conditional reliability over the next $r$ sampling intervals is defined as
\begin{equation}
    \begin{aligned}
        R_k(r)
        &:=
        \mathbb{P}
        \left(
            \tau_k>r
            \mid
            \mathcal{I}_k
        \right)
        \\
        &=
        \mathbb{P}
        \left(
            \eta_{cc,k+j}<\eta_{\mathrm{crit}},
            \quad
            \forall j\in\{1,\ldots,r\}
            \,\middle|\,
            \mathcal{I}_k
        \right)
        \\
        &=
        \mathbb{P}
        \left(
            \max_{1\leq j\leq r}
            \eta_{cc,k+j}<\eta_{\mathrm{crit}}
            \,\middle|\,
            \mathcal{I}_k
        \right),
        \qquad
        r\in\{1,\ldots,H\}.
    \end{aligned}
    \label{eq:conditional_reliability}
\end{equation}
Hence, reliability is a first-passage quantity: the degradation must remain
below $\eta_{\mathrm{crit}}$ throughout the prediction interval, rather than
only at its terminal instant.

Based on this common definition, four reliability formulations are considered
below: deterministic threshold crossing, Weibull lifetime reliability,
Bayesian degradation reliability, and degradation-dependent hazard
reliability.

\subsection{Threshold-Crossing Reliability}

The threshold-crossing formulation maps each point forecast
$\widehat{\eta}_{cc,k+r\mid k}^{q}$ directly into a predicted failure time.
For prognostic method $q$, define
\begin{equation}
    \widehat{\tau}_{k}^{\,q}
    :=
    \inf
    \left\{
        r\in\{1,\ldots,H\}:
        \widehat{\eta}_{cc,k+r\mid k}^{q}
        \geq
        \eta_{\mathrm{crit}}
    \right\},
    \label{eq:deterministic_failure_time}
\end{equation}
with $\inf\varnothing=\infty$. Thus,
$\widehat{\tau}_{k}^{\,q}$ is the first prediction step at which the
forecast severity reaches the critical threshold. If no crossing occurs
within the available horizon, only
$\widehat{\tau}_{k}^{\,q}>H$ can be inferred.

The corresponding threshold-based reliability indicator is
\begin{equation}
    R_{k}^{\mathrm{th},q}(r)
    :=
    \mathbf{1}_{\{\widehat{\tau}_{k}^{\,q}>r\}}
    =
    \mathbf{1}_{\left\{
        \displaystyle
        \max_{1\leq j\leq r}
        \widehat{\eta}_{cc,k+j\mid k}^{q}
        <
        \eta_{\mathrm{crit}}
    \right\}},
    \qquad
    r\in\{1,\ldots,H\},
    \label{eq:threshold_reliability}
\end{equation}
where $\mathbf{1}_{\{\cdot\}}$ denotes the indicator function.
Hence, the deterministic forecast is regarded as reliable through step $r$
only if its entire predicted trajectory remains below
$\eta_{\mathrm{crit}}$ over that interval. This is the point-forecast
counterpart of the first-passage reliability definition in
\eqref{eq:conditional_reliability}.

Let $T_p>0$ denote the prognosis sampling interval. Whenever
$\widehat{\tau}_{k}^{\,q}\leq H$, the corresponding remaining useful life
estimate is
\begin{equation}
    \widehat{\mathrm{RUL}}_{k}^{\,q}
    =
    T_p\widehat{\tau}_{k}^{\,q},
    \label{eq:threshold_rul}
\end{equation}
If no threshold crossing occurs within the prediction horizon, a finite RUL
cannot be determined from the available forecast; only
$\mathrm{RUL}_{k}^{\,q}>T_pH$. can be inferred.

The threshold-crossing formulation therefore provides a direct mapping from
each prognostic trajectory to a predicted failure time and RUL. However,
$R_{k}^{\mathrm{th},q}(r)$ is binary and does not account for uncertainty in
either the future degradation or its threshold-crossing time.

\subsection{Weibull Reliability Model}

Unlike the threshold-crossing formulation, which derives reliability from
the predicted fault-severity trajectory, the Weibull model describes the
motor lifetime statistically. Let $T_f\in\mathbb{R}_{\geq0}$ denote the
time to failure and assume a two-parameter Weibull distribution with scale
$\lambda_W>0$ and shape $\beta_W>0$. Its survival function and hazard rate
are
\begin{equation}
    S_W(t)
    =
    \mathbb{P}(T_f>t)
    =
    \exp
    \left[
        -
        \left(
            \frac{t}{\lambda_W}
        \right)^{\beta_W}
    \right],
    \qquad
    h_W(t)
    =
    \frac{\beta_W}{\lambda_W}
    \left(
        \frac{t}{\lambda_W}
    \right)^{\beta_W-1},
    \quad t\geq0.
    \label{eq:weibull_model}
\end{equation}
The shape parameter determines the evolution of the failure rate:
$\beta_W<1$, $\beta_W=1$, and $\beta_W>1$ correspond to decreasing,
constant, and increasing hazards, respectively.

At prognosis instant $k$, let $t_k$ denote the corresponding operating time.
Reliability over the next $r$ prognosis intervals is conditioned on survival
up to $t_k$. Accordingly,
\begin{equation}
    \begin{aligned}
        R_k^{W}(r)
        &:=
        \mathbb{P}
        \left(
            T_f>t_k+rT_s
            \,\middle|\,
            T_f>t_k
        \right)
        \\
        &=
        \frac{S_W(t_k+rT_s)}{S_W(t_k)}
        \\
        &=
        \exp
        \left\{
            -
            \left[
                \left(
                    \frac{t_k+rT_s}{\lambda_W}
                \right)^{\beta_W}
                -
                \left(
                    \frac{t_k}{\lambda_W}
                \right)^{\beta_W}
            \right]
        \right\},
        \qquad
        r\in\{1,\ldots,H\}.
    \end{aligned}
    \label{eq:weibull_reliability}
\end{equation}
Equivalently,
$R_k^{W}(r)
=
\exp[-\int_{t_k}^{t_k+rT_p}h_W(\tau)\,\mbox{d}\tau]$,
showing that reliability is determined by the failure risk accumulated over
the future interval of interest.

The parameters $(\lambda_W,\beta_W)$ must be identified from appropriate
lifetime or failure-time data representative of the considered operating
conditions. In particular, they are not inferred from the instantaneous
fault estimate $\widehat{\eta}_{cc,k}$. The Weibull model therefore serves
as a population-level lifetime benchmark, whereas the other reliability
formulations considered here exploit the predicted evolution of the
physically defined fault severity $\eta_{cc}$.

\subsection{Bayesian Reliability}

The Bayesian degradation model provides a predictive distribution of the
future fault severity and therefore allows degradation uncertainty to enter
the reliability assessment directly. At prediction step $r$, define the
pointwise reliability as
\begin{equation}
    \widetilde{R}_{k}^{B}(r)
    :=
    \mathbb{P}
    \left(
        \eta_{cc,k+r}<\eta_{\mathrm{crit}}
        \,\middle|\,
        \mathcal{H}_{k}^{\eta}
    \right)
    =
    \int_{0}^{\eta_{\mathrm{crit}}}
        p_{\mathcal{E}}
        \left(
            \eta\mid\mathcal{H}_{k}^{\eta}
        \right)
    \,\mbox{d}\eta.
    \label{eq:bayesian_pointwise_reliability}
\end{equation}
Using the truncated Gaussian predictive density in
\eqref{eq:truncated_bayesian_predictive}, this probability becomes
\begin{equation}
    \widetilde{R}_{k}^{B}(r)
    =
    \frac{
        \displaystyle
        \Phi
        \left(
            \frac{\eta_{\mathrm{crit}}-m_{k,r}}
                 {\sqrt{P_{k,r}}}
        \right)
        -
        \Phi
        \left(
            \frac{-m_{k,r}}
                 {\sqrt{P_{k,r}}}
        \right)
    }{
        \displaystyle
        \Phi
        \left(
            \frac{1-m_{k,r}}
                 {\sqrt{P_{k,r}}}
        \right)
        -
        \Phi
        \left(
            \frac{-m_{k,r}}
                 {\sqrt{P_{k,r}}}
        \right)
    },
    \label{eq:truncated_bayesian_reliability}
\end{equation}
where $\Phi(\cdot)$ denotes the standard Gaussian cumulative distribution
function.

The quantity $\widetilde{R}_{k}^{B}(r)$ is marginal: it evaluates whether
the severity is below $\eta_{\mathrm{crit}}$ at time $k+r$ but does not
exclude an earlier threshold crossing. First-passage reliability instead
requires the complete predicted trajectory to remain below the critical
severity. Accordingly,
\begin{equation}
    \begin{aligned}
        R_k^{B}(r)
        &:=
        \mathbb{P}
        \left(
            \eta_{cc,k+j}<\eta_{\mathrm{crit}},
            \quad
            \forall j\in\{1,\ldots,r\}
            \,\middle|\,
            \mathcal{H}_{k}^{\eta}
        \right)
        \\
        &=
        \mathbb{P}
        \left(
            \max_{1\leq j\leq r}
            \eta_{cc,k+j}<\eta_{\mathrm{crit}}
            \,\middle|\,
            \mathcal{H}_{k}^{\eta}
        \right).
    \end{aligned}
    \label{eq:bayesian_path_reliability}
\end{equation}
Defining
$\boldsymbol{\eta}_{k,r}
:=[\eta_{cc,k+1},\ldots,\eta_{cc,k+r}]^{\top}$, the same probability can
be expressed as
\begin{equation}
    R_k^{B}(r)
    =
    \int_{[0,\eta_{\mathrm{crit}})^r}
        p_{\mathcal{E}}
        \left(
            \boldsymbol{\eta}_{k,r}
            \mid
            \mathcal{H}_{k}^{\eta}
        \right)
    \,\mbox{d}\boldsymbol{\eta}_{k,r}.
    \label{eq:bayesian_joint_path_reliability}
\end{equation}
Thus, first-passage reliability depends on the joint predictive distribution
of the future degradation path, rather than only on the marginal distribution
at step $r$.

For the particle-based forecast, this path probability is evaluated directly
from the propagated trajectories as
\begin{equation}
    R_k^{\mathrm{PF}}(r)
    \approx
    \sum_{i=1}^{N_p}
        w_k^{(i)}
        \mathbf{1}_{\left\{
            \displaystyle
            \max_{1\leq j\leq r}
            \eta_{cc,k+j}^{(i)}
            <
            \eta_{\mathrm{crit}}
        \right\}}.
    \label{eq:particle_path_reliability}
\end{equation}
Hence, $R_k^{\mathrm{PF}}(r)$ is the total weight of predicted severity
trajectories that avoid the failure region through step $r$, with
$w_k^{(i)}=1/N_p$ for an equally weighted prognostic ensemble.

The distinction is therefore precise:
$\widetilde{R}_{k}^{B}(r)$ measures survival at a particular future instant,
whereas $R_k^{B}(r)$ and $R_k^{\mathrm{PF}}(r)$ measure survival over the
entire prediction interval. The latter are consistent with the first-passage
reliability definition in \eqref{eq:conditional_reliability}.

\subsection{Degradation-Dependent Hazard Reliability}

The preceding formulations describe reliability through threshold crossing
or lifetime statistics. A complementary approach is to define the
instantaneous failure rate directly from the predicted fault severity. For
each prognostic method
$q\in\{\mathrm{LT},\mathrm{H},\mathrm{B},\mathrm{PF}\}$, define the
normalized predicted severity
\begin{equation}
    \rho_{k+r\mid k}^{q}
    :=
    \frac{
        \widehat{\eta}_{cc,k+r\mid k}^{q}
    }{
        \eta_{\mathrm{crit}}
    },
    \qquad
    r\in\{1,\ldots,H\}.
    \label{eq:normalized_predicted_severity}
\end{equation}
Thus, $\rho_{k+r\mid k}^{q}=1$ corresponds to the prescribed critical
severity. To map the predicted winding condition into a failure rate, define
the degradation-dependent hazard
\begin{equation}
    h_k^{q}(r)
    :=
    h_0
    \exp
    \left(
        \alpha\rho_{k+r\mid k}^{q}
    \right)
    =
    h_0
    \exp
    \left(
        \frac{
            \alpha\widehat{\eta}_{cc,k+r\mid k}^{q}
        }{
            \eta_{\mathrm{crit}}
        }
    \right),
    \label{eq:degradation_hazard}
\end{equation}
where $h_0>0$ is the baseline hazard rate and $\alpha>0$ determines the
sensitivity of the failure rate to the normalized fault severity.

The cumulative hazard and corresponding reliability over the next $r$
sampling intervals are
\begin{equation}
    \begin{aligned}
        \Lambda_k^{q}(r)
        &=
        T_p
        \sum_{j=1}^{r}
        h_k^{q}(j),
        \\
        R_k^{\mathrm{H},q}(r)
        &=
        \exp\left(-\Lambda_k^{q}(r)\right).
    \end{aligned}
    \label{eq:hazard_reliability}
\end{equation}
Substituting \eqref{eq:degradation_hazard} gives
\begin{equation}
    R_k^{\mathrm{H},q}(r)
    =
    \exp
    \left[
        -
        T_p h_0
        \sum_{j=1}^{r}
        \exp
        \left(
            \frac{
                \alpha
                \widehat{\eta}_{cc,k+j\mid k}^{q}
            }{
                \eta_{\mathrm{crit}}
            }
        \right)
    \right].
    \label{eq:hazard_reliability_explicit}
\end{equation}
Hence, larger predicted fault severities increase the instantaneous hazard
and accelerate the decay of reliability.

When a probabilistic degradation forecast is available, the uncertainty in
the future fault severity can be incorporated by averaging the
severity-dependent hazard over its predictive distribution. Specifically,
\begin{equation}
    \overline{h}_k(r)
    :=
    \mathbb{E}
    \left[
        h_0
        \exp
        \left(
            \frac{\alpha}{\eta_{\mathrm{crit}}}
            \eta_{cc,k+r}
        \right)
        \,\middle|\,
        \mathcal{I}_k
    \right].
    \label{eq:expected_hazard}
\end{equation}
Unlike \eqref{eq:degradation_hazard}, which evaluates the hazard along a
point forecast, \eqref{eq:expected_hazard} accounts for the complete
predictive distribution of the future fault severity.

For the Bayesian degradation model, let
$c:=\alpha/\eta_{\mathrm{crit}}$ and define the normalization factor
\[
    Z_{k,r}
    :=
    \Phi
    \left(
        \frac{1-m_{k,r}}{\sqrt{P_{k,r}}}
    \right)
    -
    \Phi
    \left(
        \frac{-m_{k,r}}{\sqrt{P_{k,r}}}
    \right).
\]
Using the truncated Gaussian predictive density in
\eqref{eq:truncated_bayesian_predictive}, the uncertainty-averaged hazard is
\begin{equation}
    \begin{aligned}
        \overline{h}_k^{B}(r)
        &=
        h_0
        \exp
        \left(
            c m_{k,r}
            +
            \frac{c^2P_{k,r}}{2}
        \right)
        \\
        &\quad\times
        \frac{
            \displaystyle
            \Phi
            \left(
                \frac{
                    1-m_{k,r}-cP_{k,r}
                }{
                    \sqrt{P_{k,r}}
                }
            \right)
            -
            \Phi
            \left(
                \frac{
                    -m_{k,r}-cP_{k,r}
                }{
                    \sqrt{P_{k,r}}
                }
            \right)
        }{
            Z_{k,r}
        }.
    \end{aligned}
    \label{eq:bayesian_expected_hazard}
\end{equation}
Thus, the Bayesian hazard incorporates both the predicted degradation level
and its uncertainty while respecting the physical constraint
$\eta_{cc}\in[0,1]$.

For the particle-based forecast, the same expectation is evaluated directly
from the propagated severity particles as
\begin{equation}
    \overline{h}_k^{\mathrm{PF}}(r)
    \approx
    h_0
    \sum_{i=1}^{N_p}
        w_k^{(i)}
        \exp
        \left(
            \frac{
                \alpha\eta_{cc,k+r}^{(i)}
            }{
                \eta_{\mathrm{crit}}
            }
        \right).
    \label{eq:particle_expected_hazard}
\end{equation}
This approximation retains the nonparametric predictive distribution
represented by the propagated particles, with $w_k^{(i)}=1/N_p$ for an
equally weighted prognostic ensemble.

For $q\in\{\mathrm{B},\mathrm{PF}\}$, the uncertainty-aware cumulative
hazard and reliability are then
\begin{equation}
    \begin{aligned}
        \overline{\Lambda}_k^{q}(r)
        &=
        T_p
        \sum_{j=1}^{r}
        \overline{h}_k^{q}(j),
        \\
        \overline{R}_k^{\mathrm{H},q}(r)
        &=
        \exp
        \left(
            -\overline{\Lambda}_k^{q}(r)
        \right).
    \end{aligned}
    \label{eq:uncertain_hazard_reliability}
\end{equation}
Therefore, $R_k^{\mathrm{H},q}(r)$ evaluates the accumulated failure risk
along a point forecast, whereas
$\overline{R}_k^{\mathrm{H},q}(r)$ incorporates uncertainty in the predicted
winding degradation.

The reliability formulations considered in this section provide
complementary interpretations of the prognostic information.
Threshold-crossing reliability converts a predicted severity trajectory into
a first-passage time and remaining useful life, while the Weibull model
provides an independently identified statistical lifetime benchmark.
Bayesian and particle-based reliability evaluate the probability of avoiding
the critical severity, whereas degradation-dependent hazard reliability maps
the evolving fault severity into accumulated failure risk.

All severity-dependent reliability measures originate from the same
particle-filter-based fault estimate and subsequent degradation forecasts.
This common foundation allows the effects of prognosis and reliability
modeling to be examined separately and completes the connection from online
fault estimation to degradation prognosis and reliability assessment.

\section{Numerical Results}
\label{sec:numerical_results}

This section evaluates the complete estimation--prognosis--reliability
framework. The nonlinear induction-motor dynamics are propagated using
fourth-order Runge--Kutta integration with sampling period
$T_s=5\times10^{-4}$~s over $10$~s. The motor parameters are
$L_m=0.258$~H, $L_s=L_r=0.274$~H, $R_s=4.85~\Omega$,
$R_r=3.805~\Omega$, $J=0.031$~kg\,m$^2$, $p=2$,
$\omega_s=100\pi$~rad/s, and $T_l=1$~N\,m. The applied voltages are
$V_{sd}(t)=220+20\sin(0.8\pi t)$ and
$V_{sq}(t)=20\sin(1.4\pi t)$.

Unless otherwise stated, the stator contains $n_e=36$ slots and the fault
orientation is $\gamma_{cc}=4(2\pi/n_e)=40^\circ$. The progressive
inter-turn fault begins at $t_f=2$~s and is generated as
$\eta_{cc}(t)=0.01+0.008(t-t_f)+0.001(t-t_f)^2$ for $t\geq t_f$, with
$\eta_{cc}(t)=0$ before fault occurrence and projection onto $[0,1]$.
The measurement-noise standard deviations are
$[0.8,\,0.8,\,1.0]^\top$, while the particle filter uses $N_p=1000$
particles, $N_{\mathrm{th}}=0.5N_p$,
$Q_x^{1/2}=\operatorname{diag}
\{0.03,0.03,4\times10^{-4},4\times10^{-4},0.05\}$, and
$Q_\eta^{1/2}=3\times10^{-4}$.

\subsection{Fault Estimation and Output Reconstruction}

\begin{figure}[t]
    \centering
    \subfloat[Fault severity.\label{fig:fig1a}]
    {\includegraphics[width=0.25\linewidth]{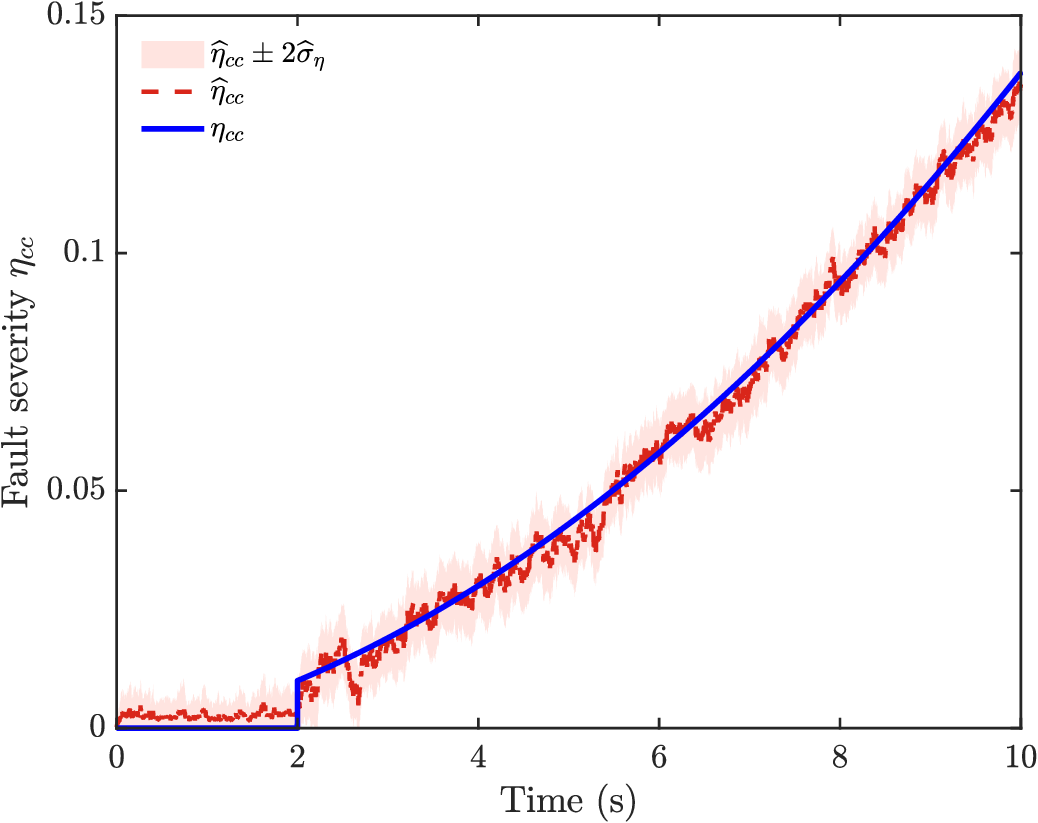}}
    \quad
    \subfloat[Estimation error.\label{fig:fig1b}]
    {\includegraphics[width=0.2575\linewidth]{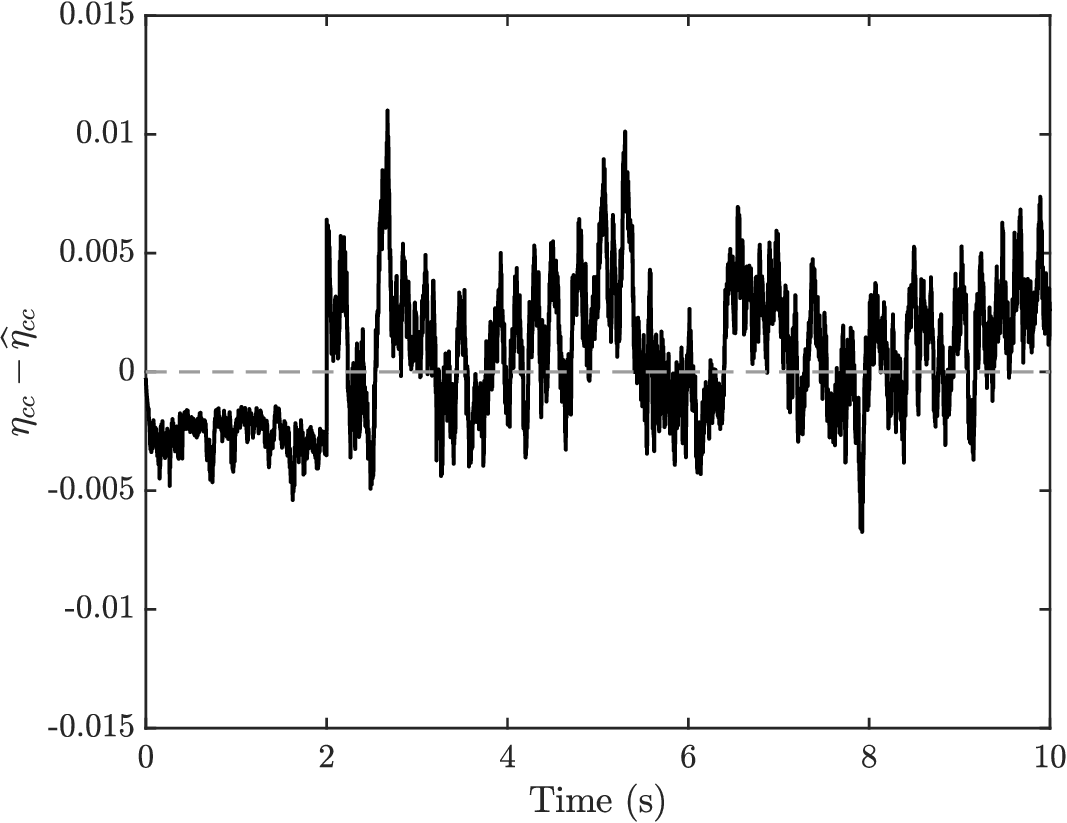}}
    \quad
    \subfloat[Effective sample size.\label{fig:fig1c}]
    {\includegraphics[width=0.25\linewidth]{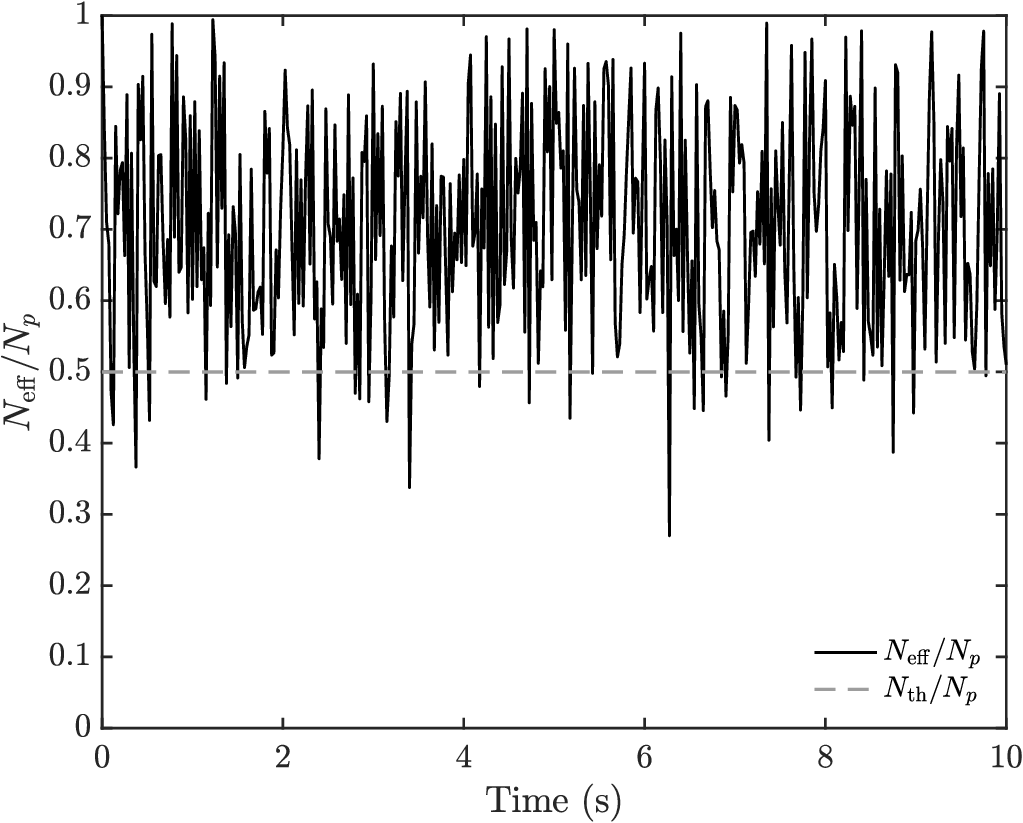}}
    \caption{Augmented-state particle-filter estimation of the stator
    inter-turn fault.}
    \label{fig:fig1}
\end{figure}

Figure~\ref{fig:fig1} evaluates the augmented-state PF of
Section~IV. The estimate $\widehat{\eta}_{cc}$ closely follows the true
progressive degradation and remains within the posterior
$\pm2\widehat{\sigma}_{\eta}$ band in Fig.~\ref{fig:fig1a}.
The resulting severity-estimation RMSE and MAE are
$2.8808\times10^{-3}$ and $2.3339\times10^{-3}$, respectively, while the
maximum absolute error is $1.1008\times10^{-2}$.
Figure~\ref{fig:fig1c} further shows the effective sample size in
\eqref{eq:effective_sample_size}; systematic resampling is activated whenever
$N_{\mathrm{eff},k}/N_p<0.5$. Over the complete simulation, $1436$
resampling events occur and the mean normalized effective sample size is
$0.716$.

\begin{figure}[t]
    \centering
    \subfloat[$d$-axis stator current.\label{fig:fig2a}]
    {\includegraphics[width=0.25\linewidth]{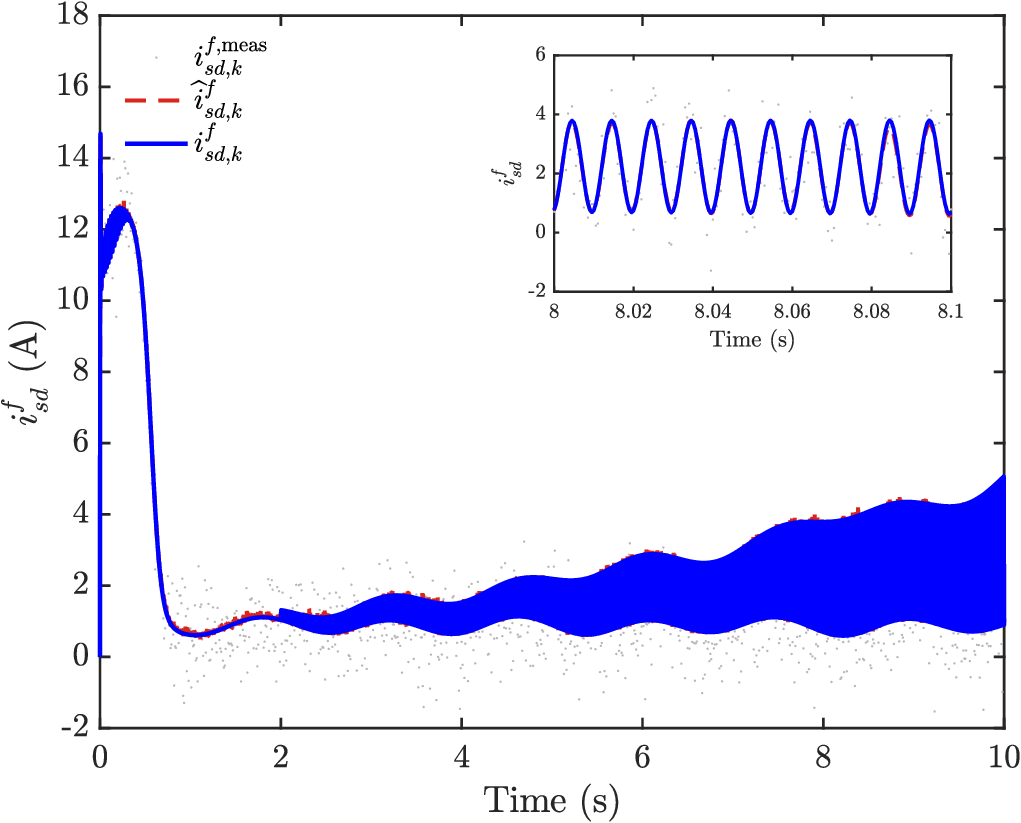}}
    \quad
    \subfloat[$q$-axis stator current.\label{fig:fig2b}]
    {\includegraphics[width=0.25\linewidth]{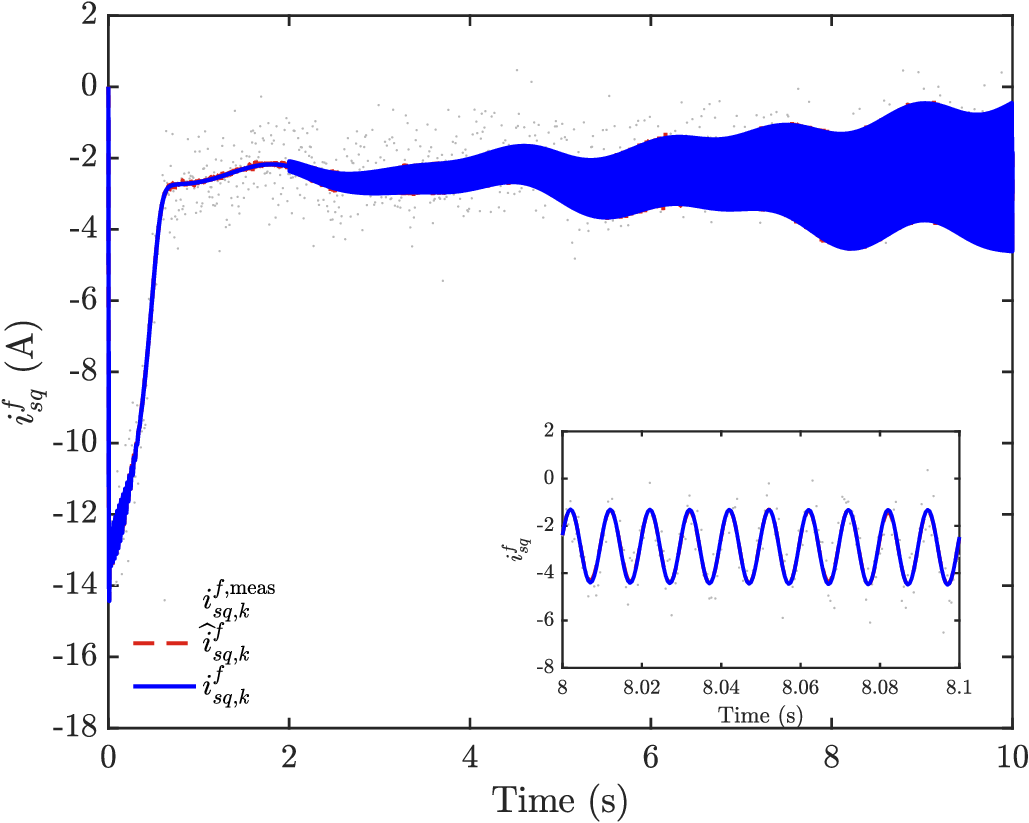}}
    \quad
    \subfloat[Rotor speed.\label{fig:fig2c}]
    {\includegraphics[width=0.25\linewidth]{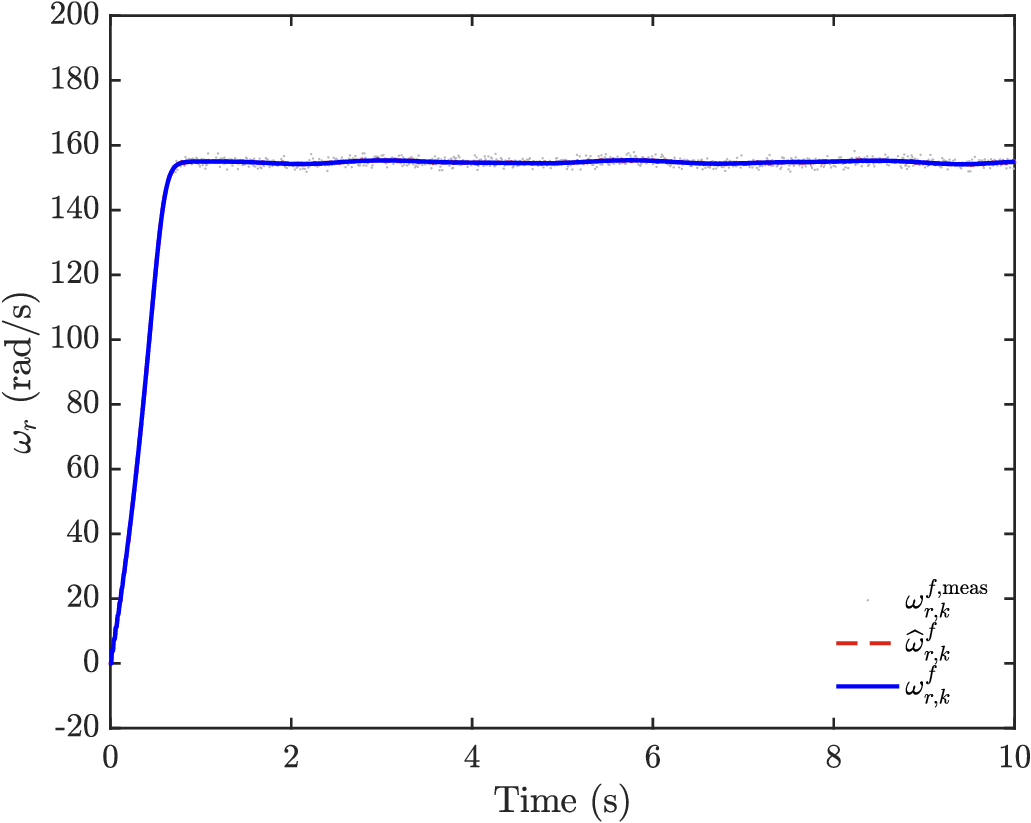}}
    \caption{Faulty-output reconstruction using the augmented-state PF.}
    \label{fig:fig2}
\end{figure}

Figure~\ref{fig:fig2} compares the measured, reconstructed, and true faulty
outputs generated by \eqref{eq:fault_output_affine}. The PF reconstructs
both stator-current components and the rotor speed despite measurement noise.
The enlarged current traces in Figs.~\ref{fig:fig2a}--\ref{fig:fig2b}
show that the reconstructed outputs follow the fast electrical oscillations,
supporting the joint state--fault representation in
\eqref{eq:augmented_state}.

\subsection{Fault Prognosis}

For prognosis, the PF degradation history is sampled every
$T_p=0.05$~s and the four models of Section~V are initialized from the
available health information. The nominal comparison uses the prognosis
instant $t_k=6.5$~s. The linear-trend model uses the most recent $30$
samples, Holt smoothing uses $\alpha=0.25$ and $\beta=0.12$, and the
Bayesian and particle forecasts use the most recent $12$ degradation
increments with $\mu_0=0$, $\sigma_{\mu,0}=2\times10^{-3}$, and
$\sigma_d=5\times10^{-4}$.

\begin{figure}[t]
    \centering
    \subfloat[Linear trend.\label{fig:fig3a}]
    {\includegraphics[width=0.25\linewidth]{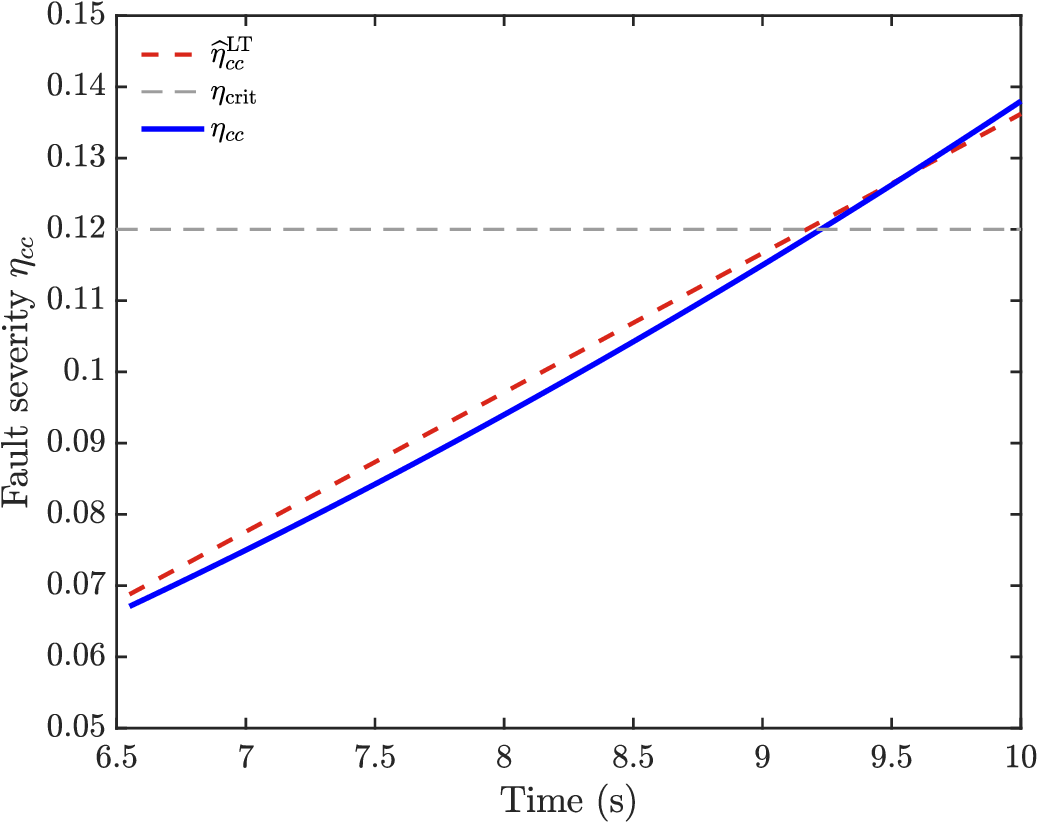}}
    \hfill
    \subfloat[Holt.\label{fig:fig3b}]
    {\includegraphics[width=0.25\linewidth]{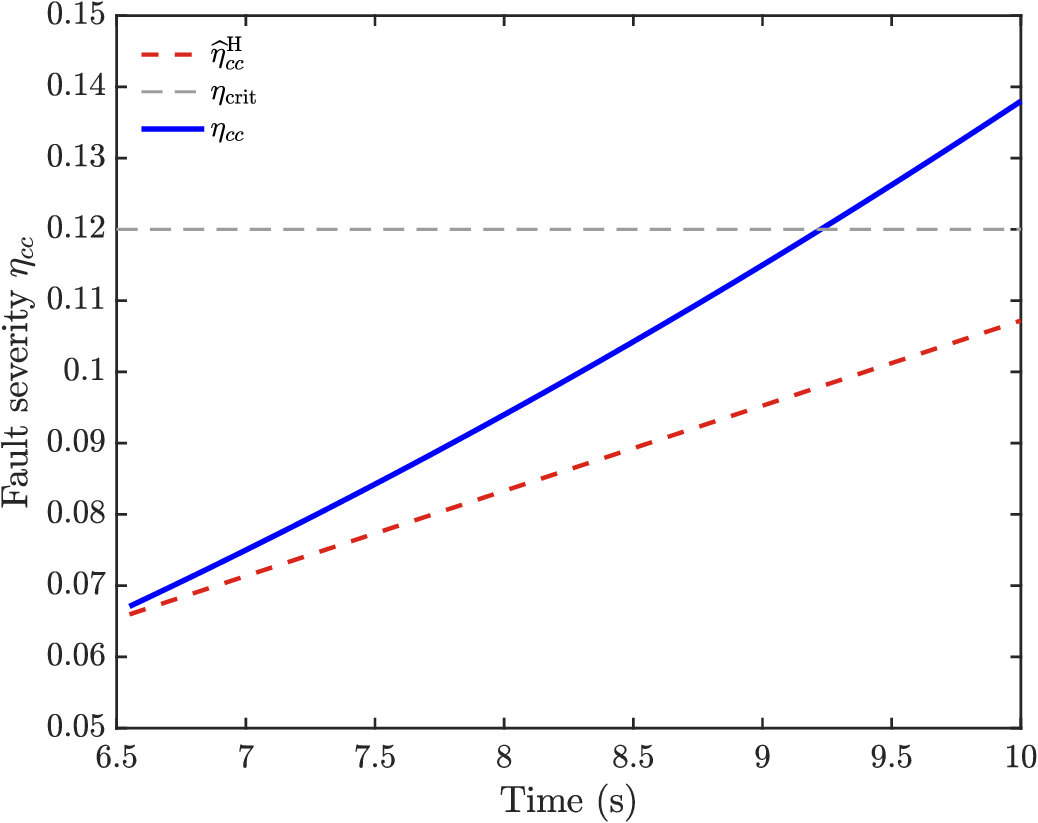}}
    \hfill
    \subfloat[Bayesian.\label{fig:fig3c}]
    {\includegraphics[width=0.25\linewidth]{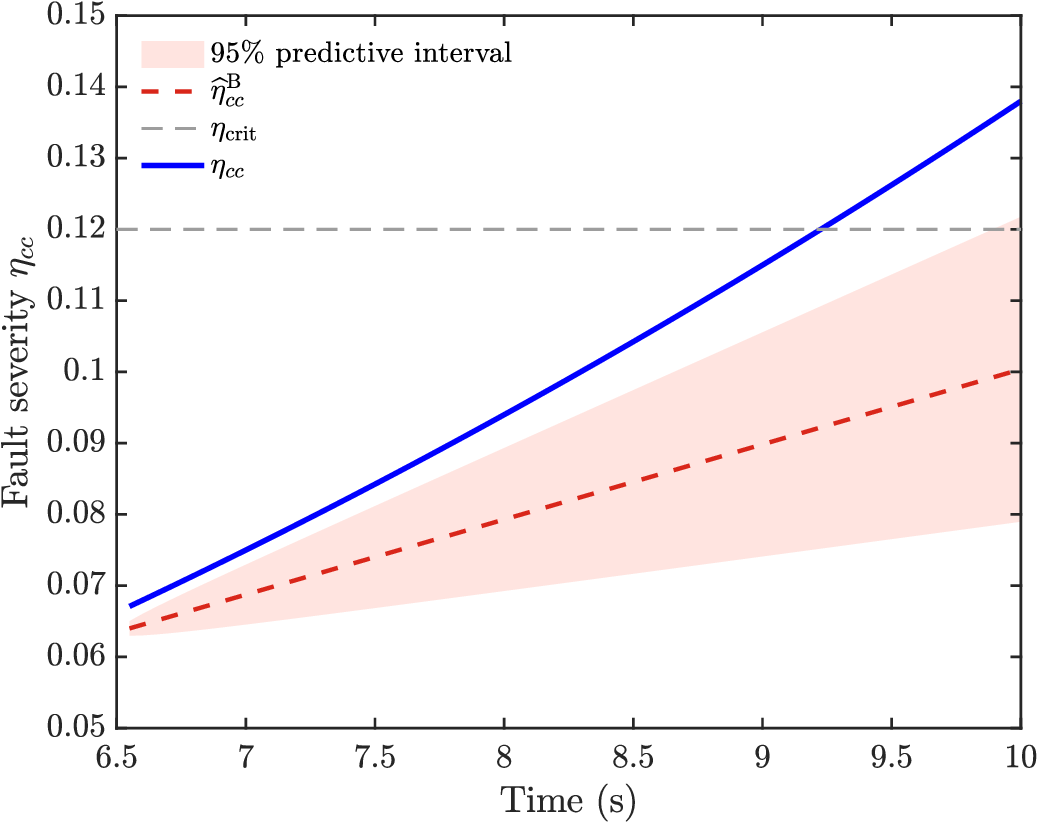}}
    \hfill
    \subfloat[Particle based.\label{fig:fig3d}]
    {\includegraphics[width=0.25\linewidth]{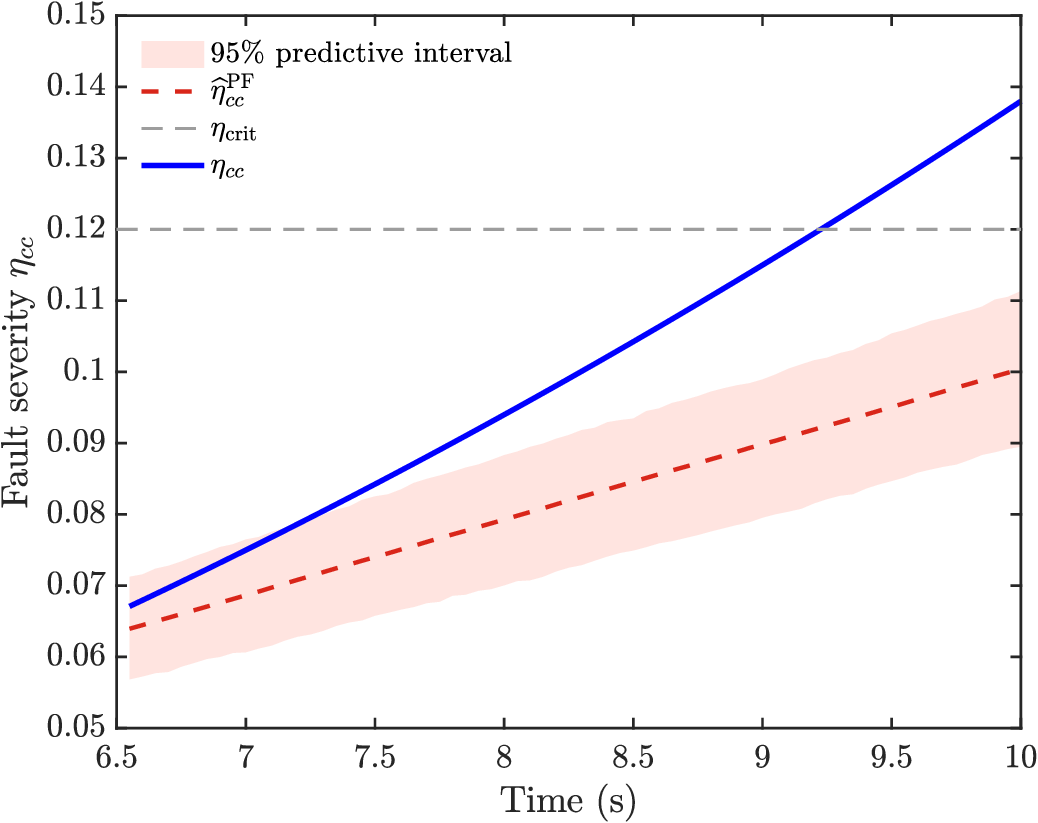}}
    \caption{Fault-severity prognosis from the common PF-based health
    information at $t_k=6.5$~s.}
    \label{fig:fig3}
\end{figure}

Figure~\ref{fig:fig3} compares the four prognostic descriptions of
$\eta_{cc}$. The linear-trend and Holt predictors provide the deterministic
forecasts in \eqref{eq:linear_prediction} and
\eqref{eq:holt_prediction}, whereas the Bayesian and particle-based methods
add predictive uncertainty through
\eqref{eq:bayesian_predictive_distribution} and
\eqref{eq:particle_predictive_distribution}. All four methods capture the
increasing degradation, while the probabilistic models additionally quantify
the growth of uncertainty with prediction horizon.

To examine prognosis beyond a single degradation trajectory, three patterns
are considered: gradual degradation,
$\eta_{cc}=0.01+0.012(t-2)$; accelerating degradation,
$\eta_{cc}=0.01+0.006(t-2)+0.0014(t-2)^2$; and an abrupt-change case
obtained from the gradual local evolution with an additional severity
increase of $0.025$ at $t=5.5$~s. Predictions are initiated at
$t_k\in\{3,5,7\}$~s and evaluated over a common $3$-s look-ahead horizon.

\begin{figure*}[t]
    \centering
    \subfloat[Gradual, $t_k=3$ s.\label{fig:fig4a}]
    {\includegraphics[width=0.25\linewidth]{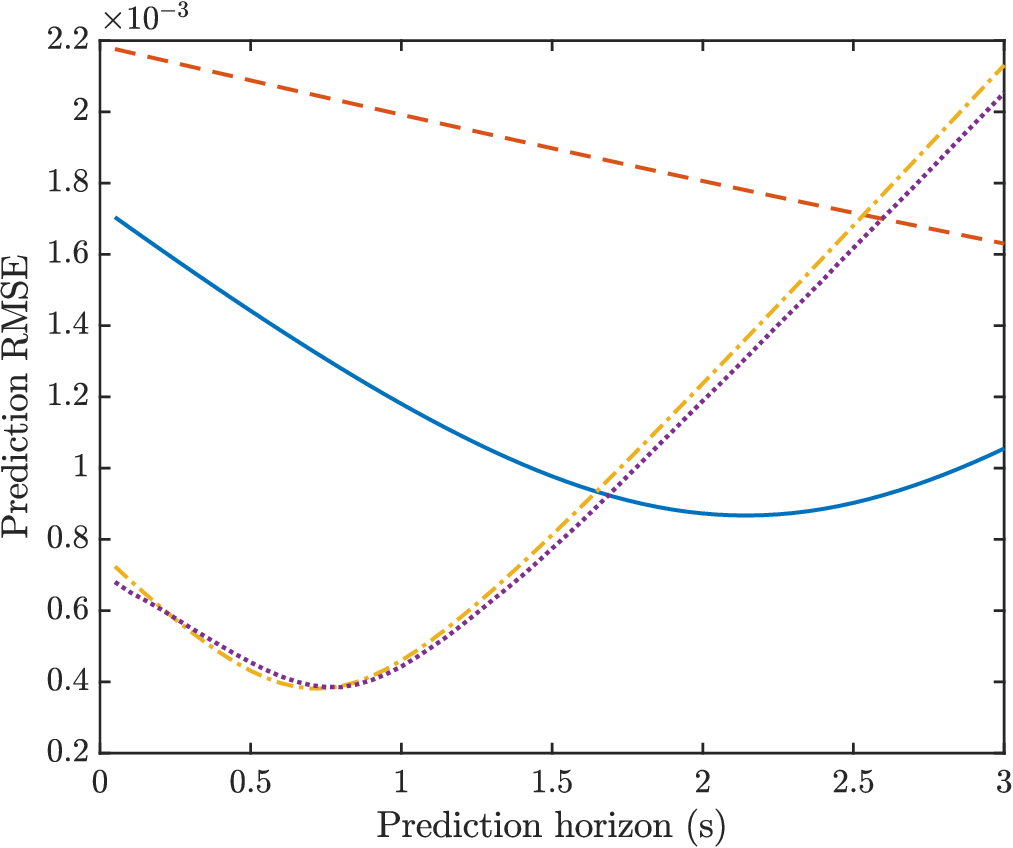}}
    \quad
    \subfloat[Accelerating, $t_k=3$ s.\label{fig:fig4b}]
    {\includegraphics[width=0.245\linewidth]{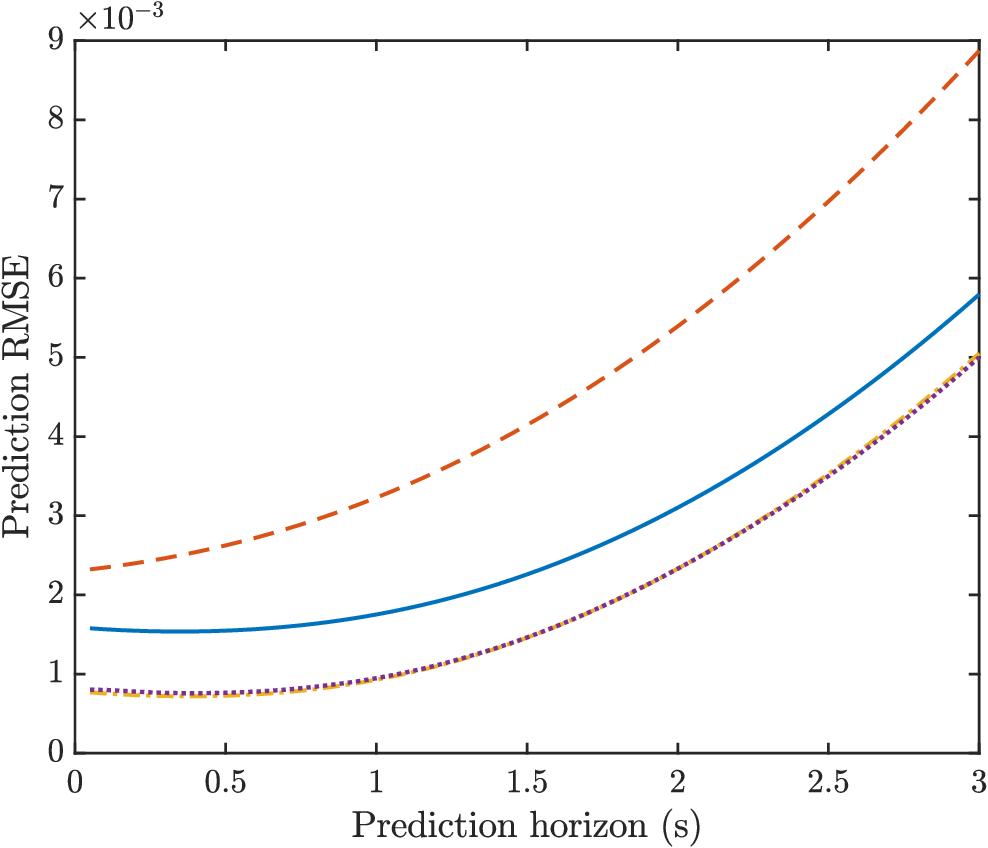}}
    \quad
    \subfloat[Abrupt, $t_k=3$ s.\label{fig:fig4c}]
    {\includegraphics[width=0.255\linewidth]{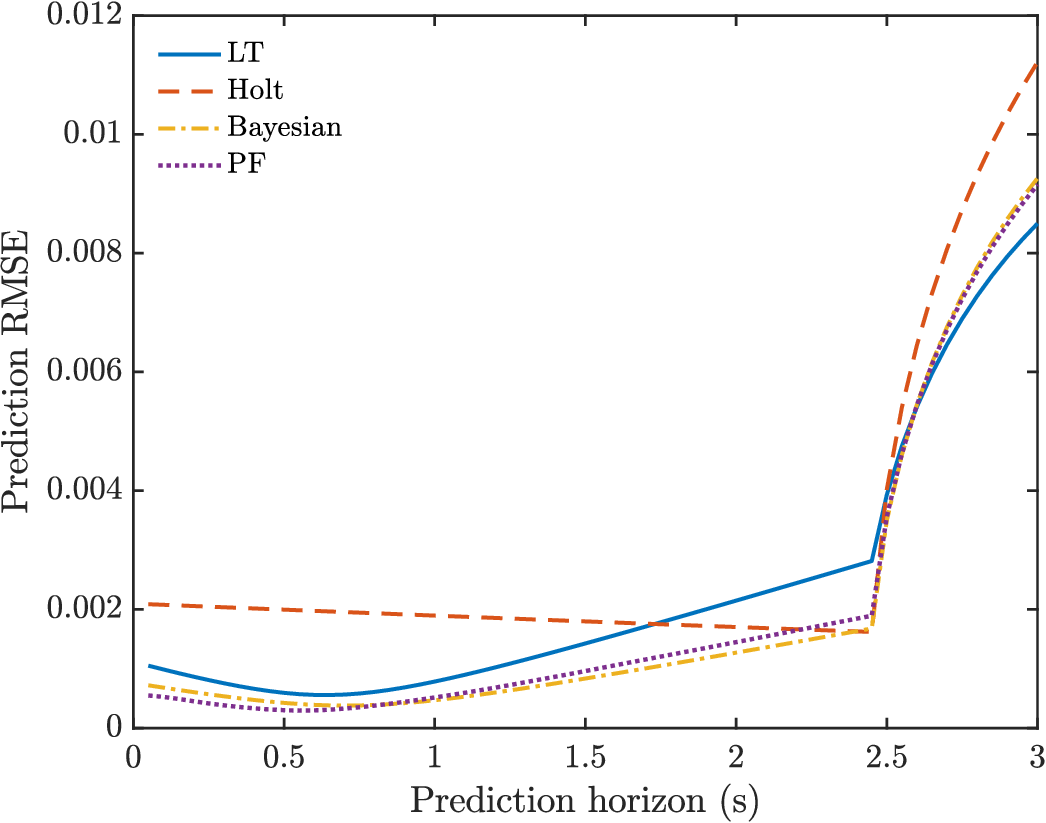}}
    \\[-1mm]
    \subfloat[Gradual, $t_k=5$ s.\label{fig:fig4d}]
    {\includegraphics[width=0.25\linewidth]{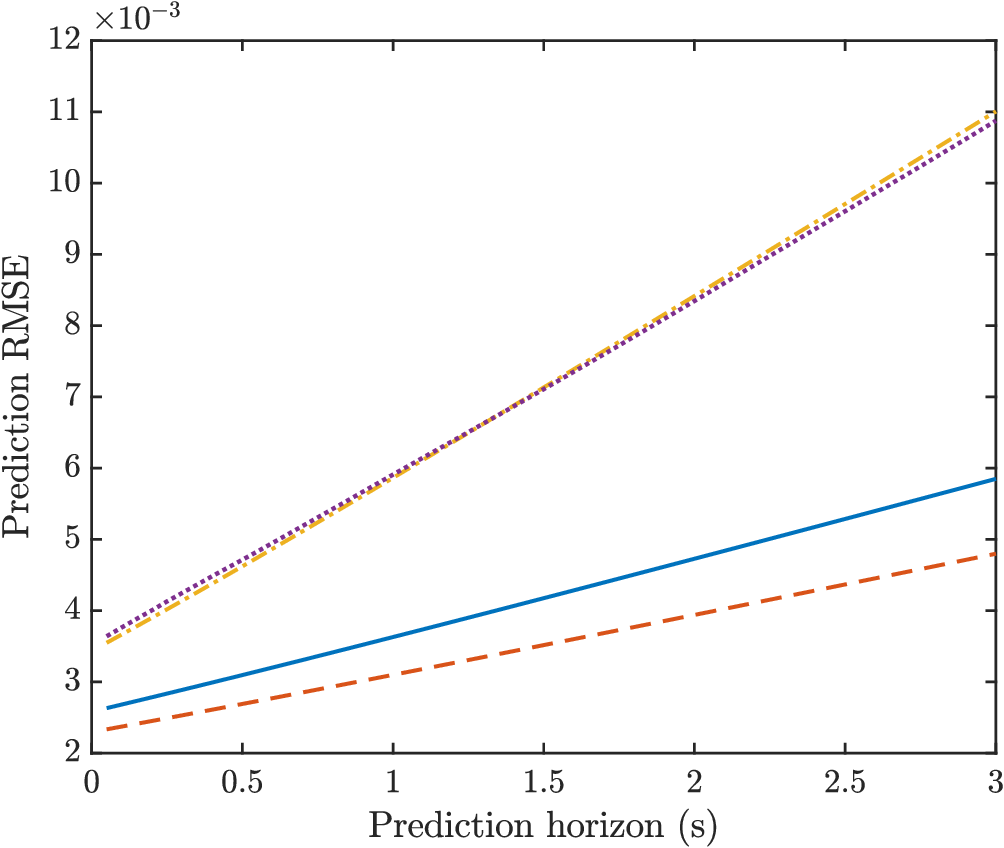}}
    \quad
    \subfloat[Accelerating, $t_k=5$ s.\label{fig:fig4e}]
    {\includegraphics[width=0.2575\linewidth]{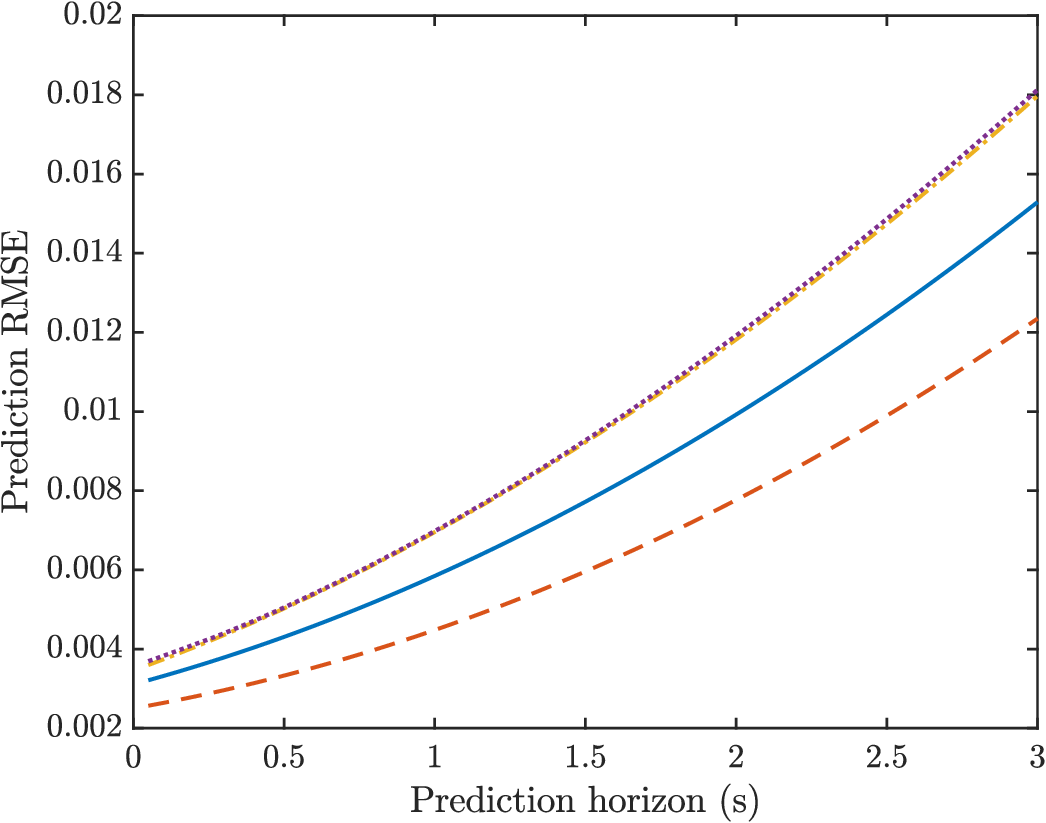}}
    \quad
    \subfloat[Abrupt, $t_k=5$ s.\label{fig:fig4f}]
    {\includegraphics[width=0.2575\linewidth]{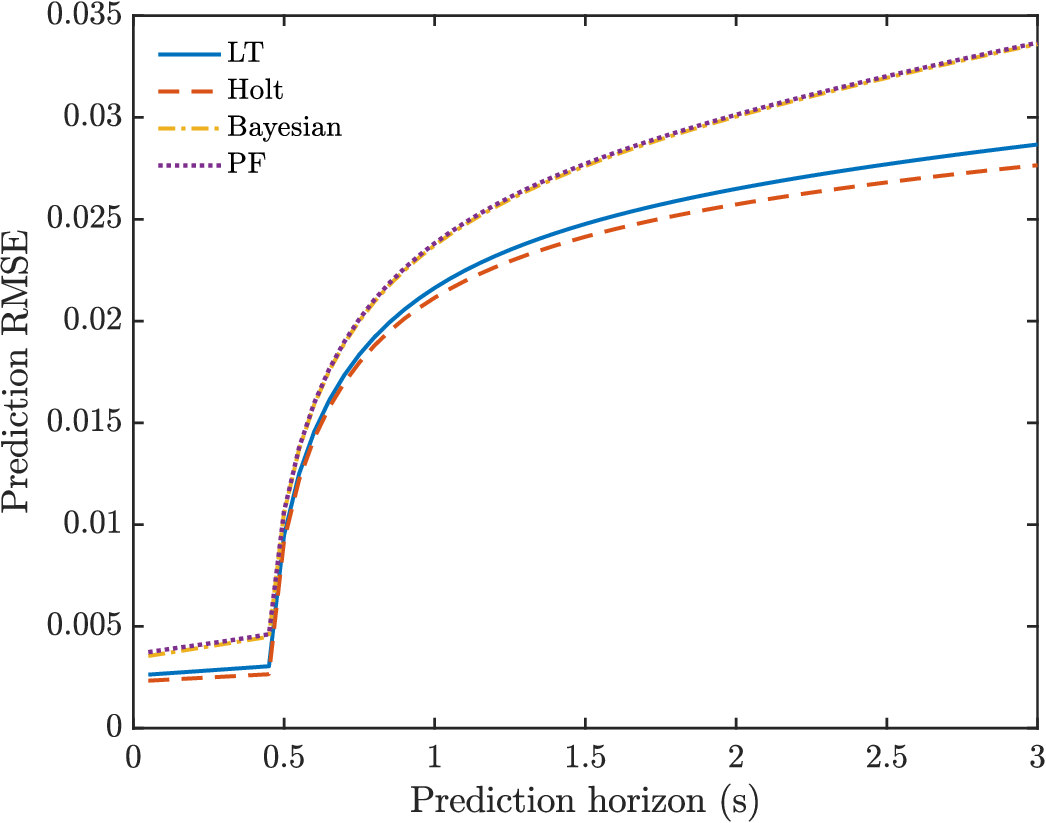}}
    \\[-1mm]
    \subfloat[Gradual, $t_k=7$ s.\label{fig:fig4g}]
    {\includegraphics[width=0.25\linewidth]{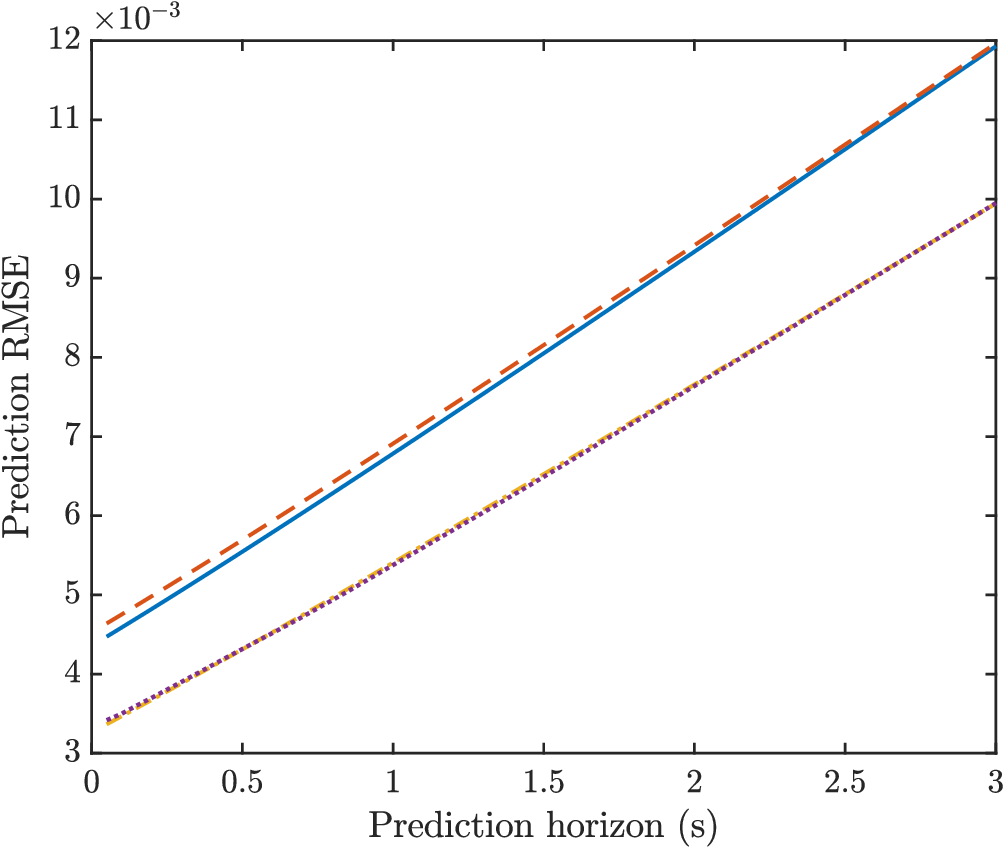}}
    \quad
    \subfloat[Accelerating, $t_k=7$ s.\label{fig:fig4h}]
    {\includegraphics[width=0.2575\linewidth]{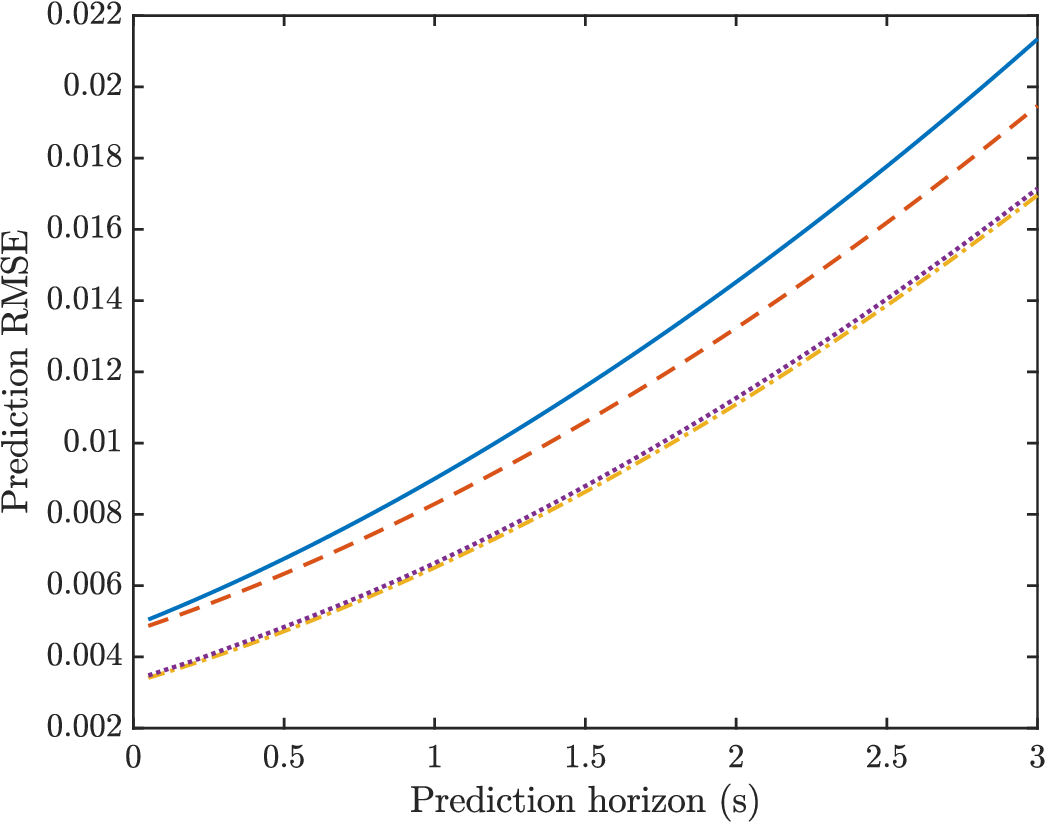}}
    \quad
    \subfloat[Abrupt, $t_k=7$ s.\label{fig:fig4i}]
    {\includegraphics[width=0.25\linewidth]{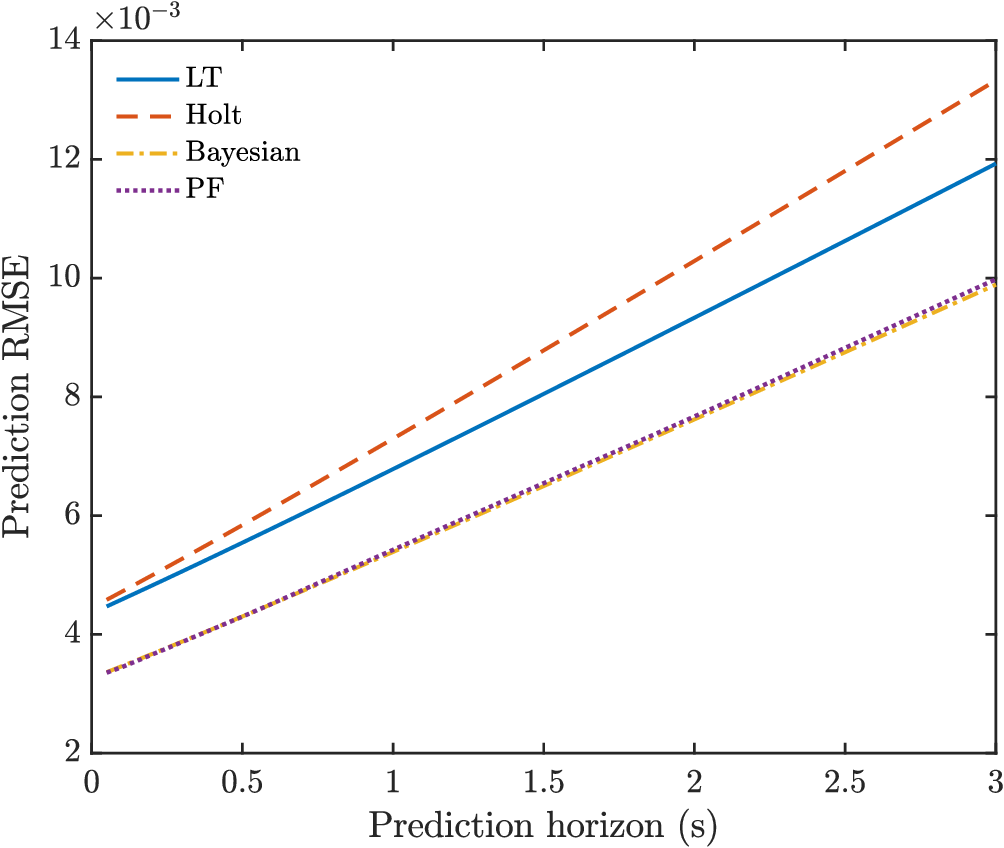}}
    \caption{Prediction RMSE versus look-ahead horizon for different
    degradation patterns and prognosis instants.}
    \label{fig:fig4}
\end{figure*}

Figure~\ref{fig:fig4} shows how forecast accuracy depends jointly on the
prediction horizon, degradation pattern, and available history. For gradual
degradation, the locally fitted models remain accurate over the complete
horizon. Acceleration introduces increasing model mismatch as the horizon
grows, while the abrupt case is the most demanding because a future change
cannot be inferred from pre-change degradation history. The Bayesian and
particle-based mean forecasts remain close in all nine cases; their maximum
absolute differences range from $8.891\times10^{-5}$ to
$3.736\times10^{-4}$.

\subsection{Threshold Crossing and Reliability}

Figures~\ref{fig:fig5}--\ref{fig:fig6} use a dedicated mildly accelerating
degradation trajectory beginning at $t_f=2$~s,
$\eta_{cc}=0.015+0.012(t-t_f)+3\times10^{-4}(t-t_f)^2$, with prognosis
initiated at $t_k=6.5$~s and critical severity
$\eta_{\mathrm{crit}}=0.12$. The corresponding health estimate is perturbed
by a small correlated error with standard deviation $5\times10^{-4}$ and
correlation coefficient $0.92$.

\begin{figure}[t]
    \centering
    \includegraphics[width=0.5\linewidth]{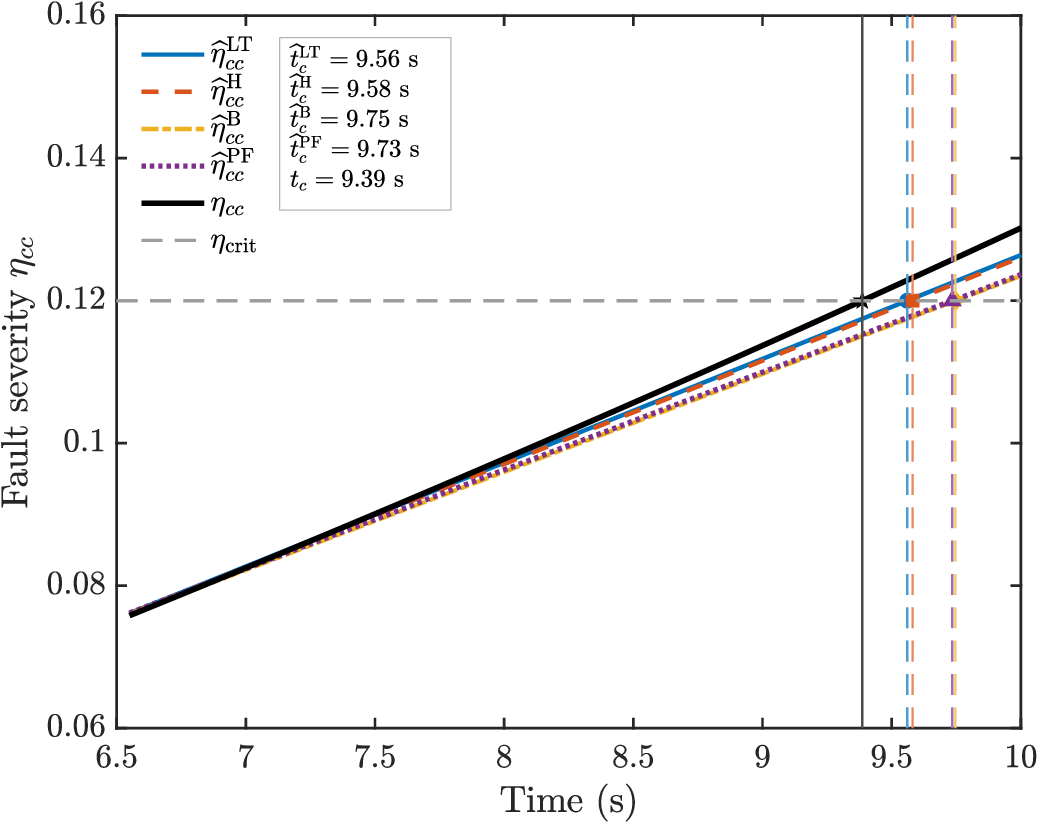}
    \caption{Predicted fault-severity trajectories, threshold crossings, and
    remaining useful life.}
    \label{fig:fig5}
\end{figure}

Figure~\ref{fig:fig5} applies the first-passage definition in
\eqref{eq:deterministic_failure_time}. The true degradation reaches
$\eta_{\mathrm{crit}}$ at $t_c=9.386$~s, corresponding to a true RUL of
$2.886$~s. The predicted crossings are $9.560$, $9.581$, $9.747$, and
$9.735$~s for the linear-trend, Holt, Bayesian, and particle-based models,
respectively, giving RUL predictions of $3.060$, $3.081$, $3.247$, and
$3.235$~s. Thus, all four forecasts identify the impending threshold
crossing, with the linear-trend prediction closest to the realized crossing
in this experiment.

\begin{figure}[t]
    \centering
    \subfloat[First-passage reliability.\label{fig:fig6a}]
    {\includegraphics[width=0.25\linewidth]{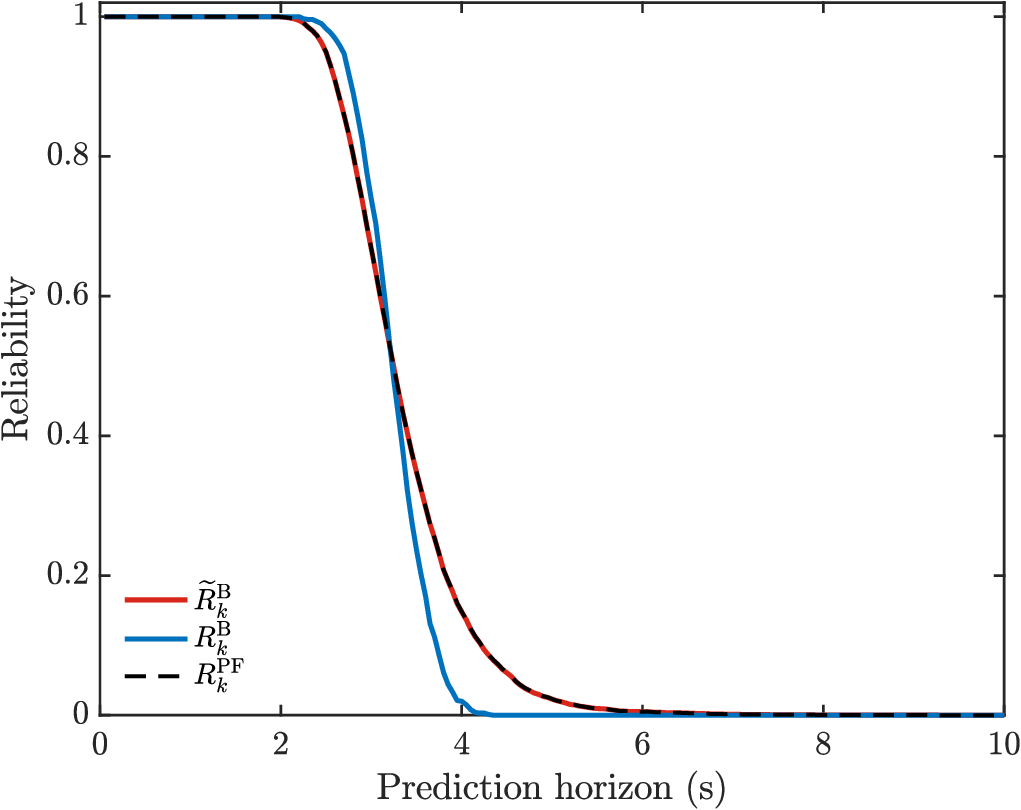}}
    \quad
    \subfloat[Hazard reliability.\label{fig:fig6b}]
    {\includegraphics[width=0.25\linewidth]{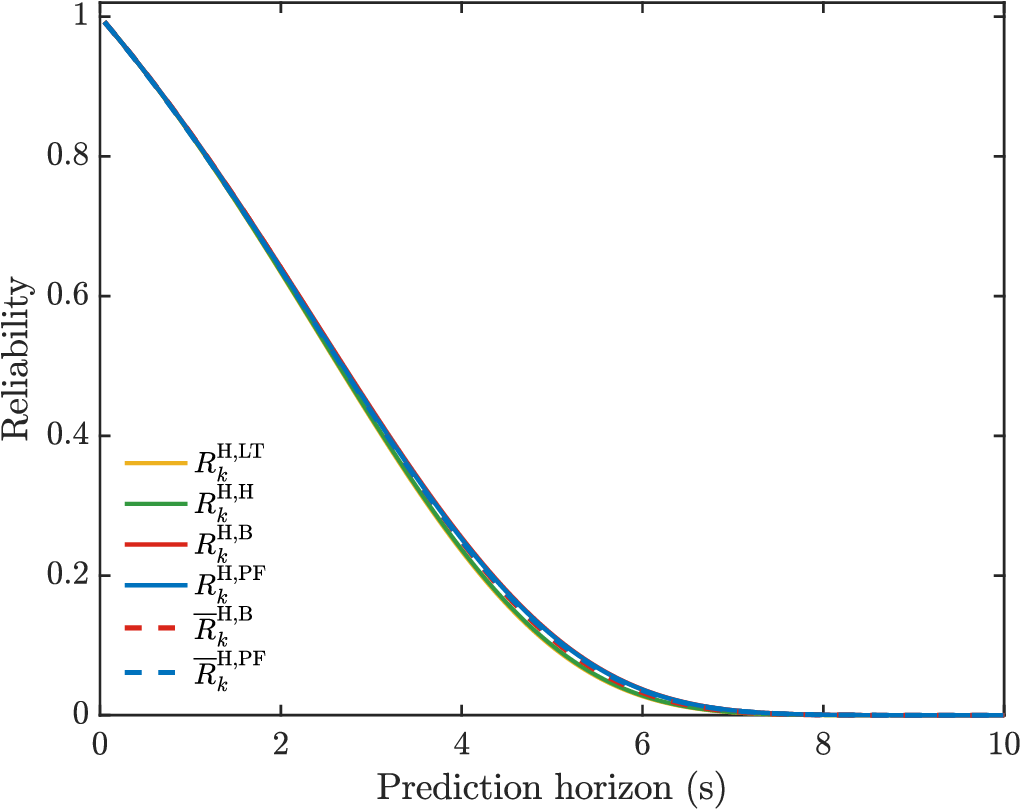}}
    \quad
    \subfloat[Weibull comparison.\label{fig:fig6c}]
    {\includegraphics[width=0.25\linewidth]{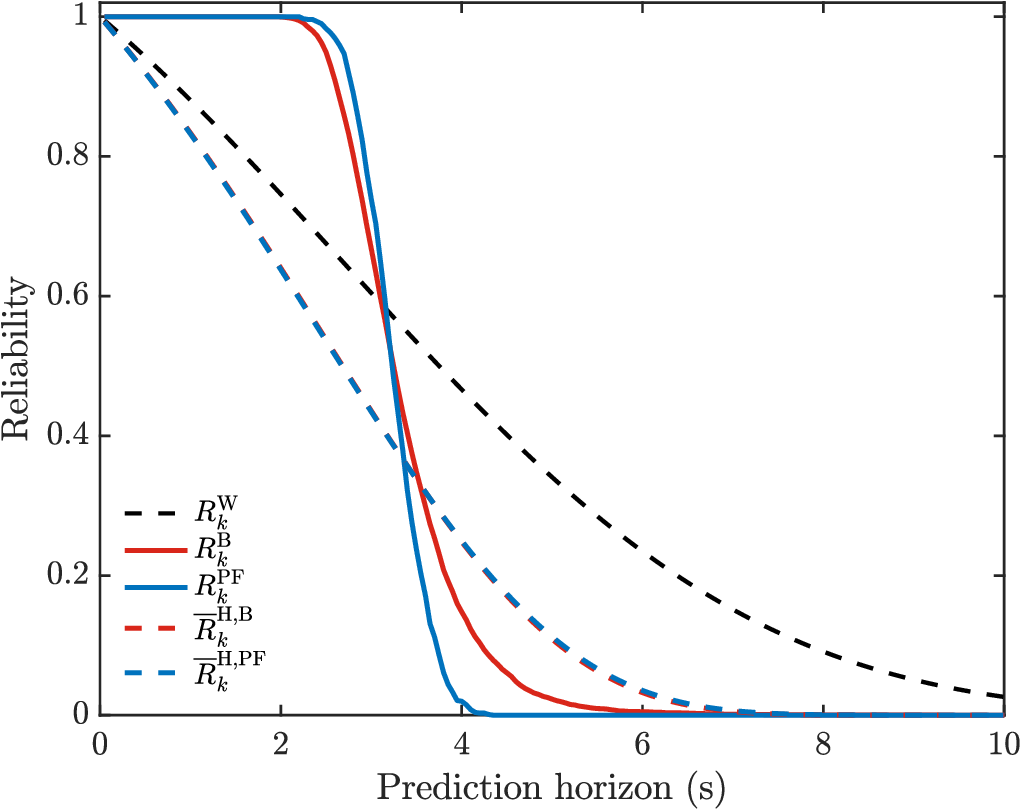}}
    \caption{Reliability assessment over a $10$-s prediction horizon.}
    \label{fig:fig6}
\end{figure}

Figure~\ref{fig:fig6a} illustrates the distinction between the marginal
Bayesian reliability $\widetilde{R}_k^{B}$ in
\eqref{eq:bayesian_pointwise_reliability} and the first-passage
reliabilities $R_k^{B}$ and $R_k^{\mathrm{PF}}$ in
\eqref{eq:bayesian_path_reliability} and
\eqref{eq:particle_path_reliability}. The pathwise quantities are more
restrictive because survival requires the complete predicted trajectory to
remain below $\eta_{\mathrm{crit}}$.

For Fig.~\ref{fig:fig6b}, the degradation-dependent hazard model uses
$h_0=0.02$ and $\alpha=3.2$. Reliability decreases as the accumulated
severity-dependent hazard in \eqref{eq:hazard_reliability_explicit}
increases. The uncertainty-aware Bayesian and PF curves additionally use
the predictive distributions through
\eqref{eq:bayesian_expected_hazard} and
\eqref{eq:particle_expected_hazard}, thereby separating point-forecast and
uncertainty-aware reliability.

Finally, Fig.~\ref{fig:fig6c} compares the degradation-informed measures
with the Weibull benchmark in \eqref{eq:weibull_reliability}, using
$\lambda_W=10.5$~s and $\beta_W=3$. The difference between these curves
emphasizes their distinct interpretations: Weibull reliability depends on
population-level lifetime statistics, whereas the Bayesian, particle-based,
and hazard formulations respond directly to the predicted winding condition.

\subsection{Sensitivity to Fault Orientation}

The final experiment evaluates whether PF estimation performance depends
strongly on the spatial orientation $\gamma_{cc}$. From
Proposition~\ref{prop:fault_spatial_periodicity}, the $36$-slot,
two-pole-pair motor has $n_e/(2p)=9$ distinct first-harmonic fault
orientations over $[0,90^\circ)$, namely
$\gamma_{cc}\in\{0^\circ,10^\circ,\ldots,80^\circ\}$. The complete
augmented-state PF is rerun for every orientation using five independent
Monte Carlo realizations.

\begin{figure}[t]
    \centering
    \includegraphics[width=0.5\linewidth]
    {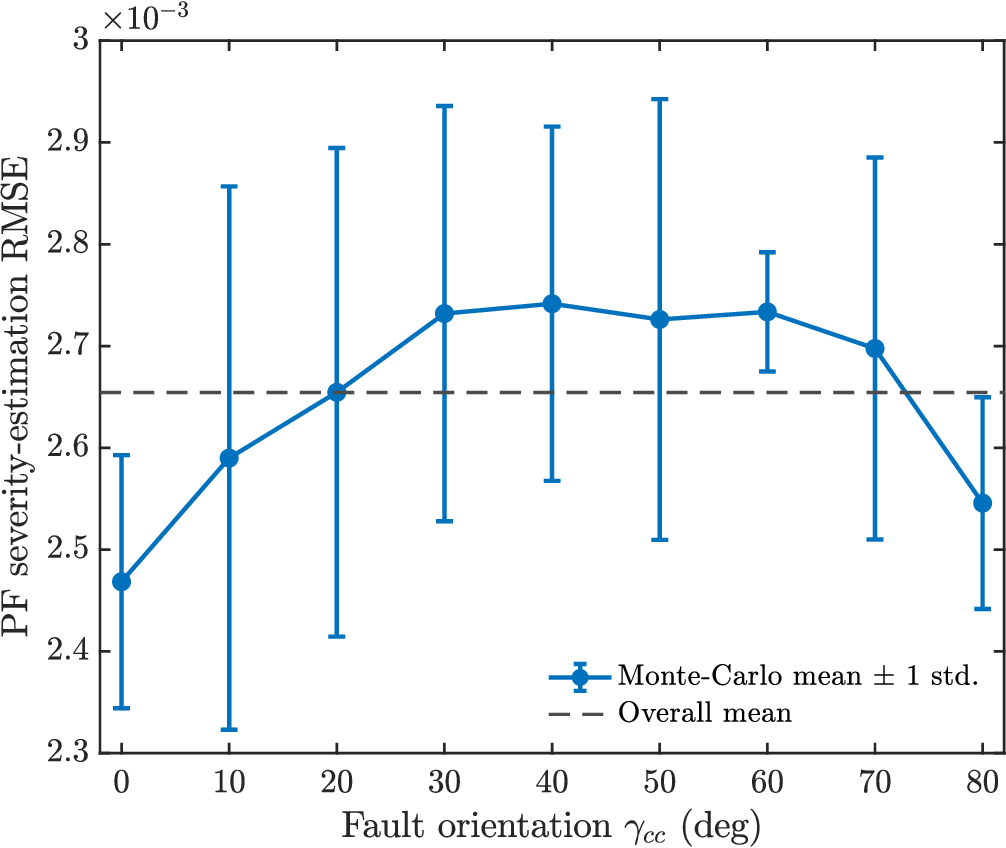}
    \caption{Sensitivity of fault-severity estimation to the spatial
    orientation $\gamma_{cc}$. Error bars denote one standard deviation over
    five Monte Carlo realizations.}
    \label{fig:fig7}
\end{figure}

Figure~\ref{fig:fig7} shows only moderate variation of estimation accuracy
with fault orientation. The mean RMSE ranges from
$2.4685\times10^{-3}$ at $\gamma_{cc}=0^\circ$ to
$2.7416\times10^{-3}$ at $\gamma_{cc}=40^\circ$, giving a worst-to-best
ratio of $1.111$. The overall mean RMSE is $2.6544\times10^{-3}$ and the
relative spread is $10.29\%$. Thus, although $\gamma_{cc}$ changes the
fault signature $\psi(t;\gamma_{cc})$, the augmented-state PF maintains
comparable severity-estimation accuracy across all nine distinct spatial
orientations.

\section{Conclusion}
\label{sec:conclusion}

This paper developed an integrated framework for stator inter-turn fault
estimation, prognosis, and reliability assessment in induction motors. The
fault was represented through a physically interpretable severity variable
$\eta_{cc}$ and spatial orientation $\gamma_{cc}$, leading to an output
model in which the fault contribution is affine in the unknown severity.
An augmented-state particle filter was then used to estimate jointly the
electromechanical state and winding degradation while preserving the physical
constraint $\eta_{cc}\in[0,1]$ and quantifying posterior uncertainty.

The estimated degradation history was subsequently propagated using linear
trend, Holt, Bayesian, and particle-based prognostic models. This enabled
both deterministic and probabilistic predictions of future fault severity,
which were then mapped into threshold-crossing RUL, first-passage reliability,
degradation-dependent hazard reliability, and a Weibull lifetime benchmark.
The resulting framework therefore connects online fault diagnosis directly
to forward-looking health assessment without changing the underlying physical
degradation variable.

Numerical results showed accurate fault-severity estimation and faulty-output
reconstruction, meaningful separation among prognostic models under gradual,
accelerating, and abrupt degradation, and consistent threshold-crossing and
reliability predictions. The spatial-orientation study further showed that
the particle-filter estimation accuracy remains comparable across all
distinct fault orientations implied by the periodic fault geometry. These
results support the use of the proposed framework as a unified probabilistic
pipeline from fault detection and estimation to prognosis and reliability
assessment.

Future work will extend the framework toward experimental validation under
nonstationary operating conditions and supply imbalance, where reliable ITSC
diagnosis remains challenging \cite{FW1}. Joint online inference of the fault
severity and spatial orientation will also be investigated, together with
physics-informed and hybrid data-driven prognostic models that can exploit
the physical degradation structure while adapting to limited fault data
\cite{FW2,Hedesh-CCTA}. Another direction is the integration of the proposed
estimation--prognosis pipeline with digital-twin architectures for continuous
model updating and life-cycle health assessment \cite{FW4,Wafi-SIAM,Wafi-SIAM-arXiv}. Finally,
transfer learning across operating regimes and the explicit conversion of
prognostic uncertainty into condition-based maintenance decisions will be
considered to improve generalization and operational usefulness
\cite{FW6,FW7}.

\bibliographystyle{ieeetr}
\bibliography{reference}

\end{document}